\definecolor{mygray}{gray}{.3}
\providecommand{\U}[1]{\protect\rule{.1in}{.1in}}
\numberwithin{equation}{section}
\theoremstyle{theorem}\newtheorem{definition}{Definition}[section]
\theoremstyle{definition}
\theoremstyle{definition}
\theoremstyle{definition}
\theoremstyle{theorem}
\theoremstyle{theorem}\newtheorem{theorem}{Theorem}[section]
\theoremstyle{theorem}\newtheorem{corollary}{Corollary}[section]
\theoremstyle{theorem}
\theoremstyle{theorem}\newtheorem{proposition}{Proposition}[section]
\theoremstyle{theorem}
\theoremstyle{remark}
\theoremstyle{remark}
\theoremstyle{remark}
\theoremstyle{remark}
\theoremstyle{remark}
\theoremstyle{remark}
\newcolumntype{R}{>{\raggedleft\arraybackslash}X}
\newcolumntype{L}{>{\raggedright\arraybackslash}X}
\newenvironment{keywords}[1][Keywords]{\noindent\textbf{#1:} }{\noindent}
\newenvironment{jel}[1][JEL Classification]{\noindent\textbf{#1:} }{\noindent}
\newenvironment{msc}[1][Mathematics Subject Classification (2010)]{\noindent\textbf{#1:} }{\noindent}
\begin{document}

\title{\textbf{A Two-Phase Dynamic Contagion Model for COVID-19}}

\author{
Zezhun Chen
\\{\textit{London School of Economics}}
\and
Angelos Dassios
\\{\textit{London School of Economics}}
\and
Valerie Kuan
\\{\textit{ University College London}}
\and
Jia Wei Lim
\\{\textit{Brunel University London}}
\and
Yan Qu\thanks{Correspondence Email: y.qu3@lse.ac.uk}
\\{\textit{University of Warwick}}
\and
Budhi Surya
\\{\textit{Victoria University of Wellington}}
\and
Hongbiao Zhao
\\{\textit{Shanghai University of Finance and Economics}}
}

\date{\today}
\maketitle

\begin{abstract}
In this paper, we propose a continuous-time stochastic intensity model, namely, {\it two-phase dynamic contagion process} (2P-DCP), for modelling the epidemic contagion of COVID-19  and investigating the lockdown effect based on the dynamic contagion model introduced by \citet{dassios2011dynamic}. It allows randomness to the infectivity of individuals rather than a constant reproduction number as assumed by  standard models. Key epidemiological quantities, such as the distribution of final epidemic size and expected epidemic duration, are derived and estimated based on real data for various regions and countries. The associated time lag of the effect of  intervention in each country or  region is estimated. Our results are consistent with the incubation time of COVID-19 found by recent medical study. We demonstrate that our model could potentially be a valuable tool in the modeling of COVID-19. More importantly, the proposed model of 2P-DCP could also be used as an important tool in epidemiological modelling as this type of contagion models with very simple structures is adequate to describe the evolution of regional epidemic and worldwide pandemic.
\end{abstract}

\begin{keywords}
{\footnotesize {Stochastic intensity model; Stochastic epidemic model; Two-phase dynamic contagion process; COVID-19; Lockdown} }
\end{keywords}
\\
\begin{msc}
{\footnotesize {Primary: 60G55; Secondary: 60J75} }
\end{msc}
\\
\begin{jel}
{\footnotesize {I1; H0; E6} }
\end{jel}

\section{Introduction}

In the early stages of epidemic modelling, the spread of diseases was formulated as a deterministic process. The classical deterministic model of {\it susceptible-infectious-recovered} (SIR) was introduced in the seminal paper of \citet{kermack1927contribution}. It models the spread and ultimate containment of an infection in a setting where those who recover are immune to the disease and thus the susceptible population declines over time. Many epidemic models are variations of the SIR model, see \citet{brauer2008mathematical,keeling2008modeling,diekmann2013mathematical} and \citet{martcheva2015introduction}. For example, during the outbreak of COVID-19 since December 2019, a commonly adopted approach for predicting the number of infections is the {\it susceptible-exposed-infectious-recovered} (SEIR) model, which adds an exposed period to the SIR model for accounting the reported incubation period of COVID-19 during which individuals are not yet infectious, e.g. \citet{berger2020seir,liu2020reproductive} and \citet{tian2020investigation}. More recently, \citet{acemoglu2020multi} develop a multi-risk SIR model, which takes into account that different subpopulations have different risks and is applied to analysing optimal lockdown.  \\

However, the random nature of the epidemics spread in our real world suggests that a stochastic model is needed. A continuous-time stochastic counterpart of SIR model was first proposed by \citet{mckendrick1925applications}, and then a variety of stochastic models were studied in the literature, e.g. \citet{bartlett1949some,bartlett1956deterministic} and \citet{bailey1950simple,bailey1953total,bailey1957mathematical}. For more recent developments on stochastic epidemic models in general, see \citet{daley1999epidemic,andersson2000stochastic,allen2008introduction} and \citet{fuchs2013inference}.
In particular, many researchers adopted branching processes. \citet{ball1983threshold} used the birth-and-death process for constructing a sequence of general stochastic epidemics, and \citet{ball1995strong} used branching processes to approximate the early stages of epidemic dynamics, see also \citet{britton2010stochastic} and \citet{ball2016expected}.\\

In this paper, we propose a continuous-time stochastic epidemic model, namely, the {\it two-phase dynamic contagion process} (2P-DCP), for modelling the epidemic contagion. It is a branching process, and can be considered as a generalisation of dynamic contagion process \citep{dassios2011dynamic}, which is an extension of the classical Hawkes process \citep{hawkes1971point,hawkes1971spectra}. In fact, Hawkes process and its various generalisations were originally used for modelling earthquakes in seismology, and recently become extremely popular for modelling financial contagion in economics, see \citet{bowsher2007modelling,large2007measuring,embrechts2011multivariate,bacry2013modelling,
bacry2013some,ait2010modeling,dassios2017exact,dassios2017ageneralised} and \citet{qu2019efficient}.
Analogously, we advocate that they are also applicable to epidemiology. As not all individuals are equally infectious in reality, the main advantage of this Hawkes-based approach is that,  it allows randomness to the infectivity of individuals, rather than a constant reproduction number (the average number of subsequent infections of an infected individual) in standard models.
In this paper, we adopt the 2P-DCP as a more realistic and parsimonious example of Hawkes-based models for modelling the current progression of COVID-19 and investigating the lockdown effect. Key epidemiological quantities, such as the distribution of final epidemic size and expected epidemic duration, have been derived and estimated based on real data. Pandemics have largely shaped the history of human being as described in the popular book by \citet{mcneill1976plagues}, and have made huge impacts to our society and and economy. However, mathematical models developed in epidemiology and economics don't talk to each other much until the current outbreak of COVID-19, which needs urgent calls (e.g. from \href{https://royalsociety.org/news/2020/03/urgent-call-epidemic-modelling/}{The Royal Society}) for researchers across disciplines to work  together and jointly support the scientific modelling for epidemics, see recent intensive interplays between the two fields, e.g. \citet{acemoglu2020multi,alvarez2020simple,atkeson2020will,eichenbaum2020macroeconomics} and \citet{guerrieri2020macroeconomic}. Our paper also responds to the calls by introducing the Hawkes-based approach as a potentially very valuable tool for epidemic modelling.\\

This paper is organised as follows:  Section \ref{Sec_2PDCP} offers the preliminaries including an introduction and formal mathematical definitions    for our stochastic epidemic model,  a two-phase dynamic contagion process.
Key distributional properties, such as the distribution of final epidemic size and expected epidemic duration, are provided in Section \ref{Sec_Distribution}.
In Section \ref{Sec_Implementation}, our model is implemented based on real data, and the associated time lag of the effect of  intervention   in each country or  region is estimated.
Finally, Section \ref{Sec_Conclusion} draws a conclusion for this paper, and proposes some issues for possible further extensions and future research.

\section{Two-Phase Dynamic Contagion Process}\label{Sec_2PDCP}

In this section, we introduce a {\it two-phase dynamic contagion process (2P-DCP)} for modelling the dynamics of COVID-19 contagion.
The  unobservable  {\it effective time} that   aggregated  government interventions (e.g. lockdown of a city or country) came into effect is denoted by $\ell>0$, which divides the COVID-19 epidemic dynamics into two phases. Note that, the time point $\ell$ is different from the exact timing of intervention that can be observed.
The cumulated number of infected individuals is described by a counting process $N_t$ with $N_0=0$, and it is modelled by a two-phase dynamic contagion process defined as below.

\begin{definition}[Two-Phase Dynamic Contagion Process (2P-DCP)]
A two-phase dynamic contagion process (2P-DCP) is a point process $N_t$ with two phases:
\begin{description}
  \item[Phase 1 (Full Contagion):] For the first phase period $t \in [0, \ell)$, $N_t$ follows a {\it dynamic contagion process} with stochastic intensity
  \begin{equation}\label{Eq_DCP_intensity_Phase1}
  \lambda_t
  ~=~
  \lambda_0 e^{-\delta t} + \sum_{k=1}^{N^*_t} Z_k e^{-\delta (t-T_k^*)} + \sum_{i=1}^{N_t}Y_i e^{-\delta (t-T_i)},
  \qquad t \in [0, \ell],
  \end{equation}
  where
  \begin{itemize}
    \item $\lambda_0>0$ is the initial intensity at time $t=0$;

    \item $\delta>0$ is the constant rate of exponential decay;

    \item $N^*_t\equiv\{T^*_k\}_{k=1,...}$ is a Poisson process of constant rate $\varrho>0$ arriving in time $t\le \ell$;

    \item $\{Z_k\}_{k=1,...,N_{\ell}^*}$ are  i.i.d. {\it externally-exciting} jump sizes, realised at times $\{T_k^*\}_{k=1,..., {N_{\ell}^*}}$, with distribution $H(y)$; 

    \item $\{Y_i\}_{i= 1,...,N_{\ell}}$ are i.i.d. {\it self-exciting} jump sizes of the first phase, realised at  times $\{T_i\}_{i=1,..., {N_{\ell}}}$ , with distribution $G_1(y)$.

  \end{itemize}

  \item[Phase 2 (Self Contagion):] For the second phase period $t \in (\ell, \infty)$, $N_t$ is a pure self-exciting {\it Hawkes process}
  with stochastic intensity
  \begin{equation}\label{Eq_DCP_intensity_Phase2}
    \lambda_t
  ~=~
 {\lambda_{\ell}} e^{-\delta {(t-\ell)}} + \sum_{i={N_{\ell} +1}}^{N_t}Y_i e^{-\delta (t-T_i)},
  \qquad
  t \in (\ell, \infty),
  \end{equation}
  where
  \begin{itemize}
    \item ${\lambda_{\ell}}$ is the initial intensity of the second phase starting at the cutoff time point $\ell$, which is the terminal intensity of the first phase;

    \item $\{Y_i \}_{i={N_{\ell}+1},...}$ are i.i.d. {\it self-exciting} jump sizes of the second phase, realised at times $\{T_i\}_{i={N_{\ell}+1},...}$, with distribution $G_2(y)$. 
        Note that compared with the average mean of the self-exciting jump size for Phase 1, the mean of self-exciting jump in Phase 2 is smaller, which demonstrates that the COVID-19 becomes less contagious on average after time point $\ell$.
  \end{itemize}
\end{description}
\end{definition}

\begin{figure}[hbt!]
 \begin{center}
    \includegraphics[width=1\textwidth]{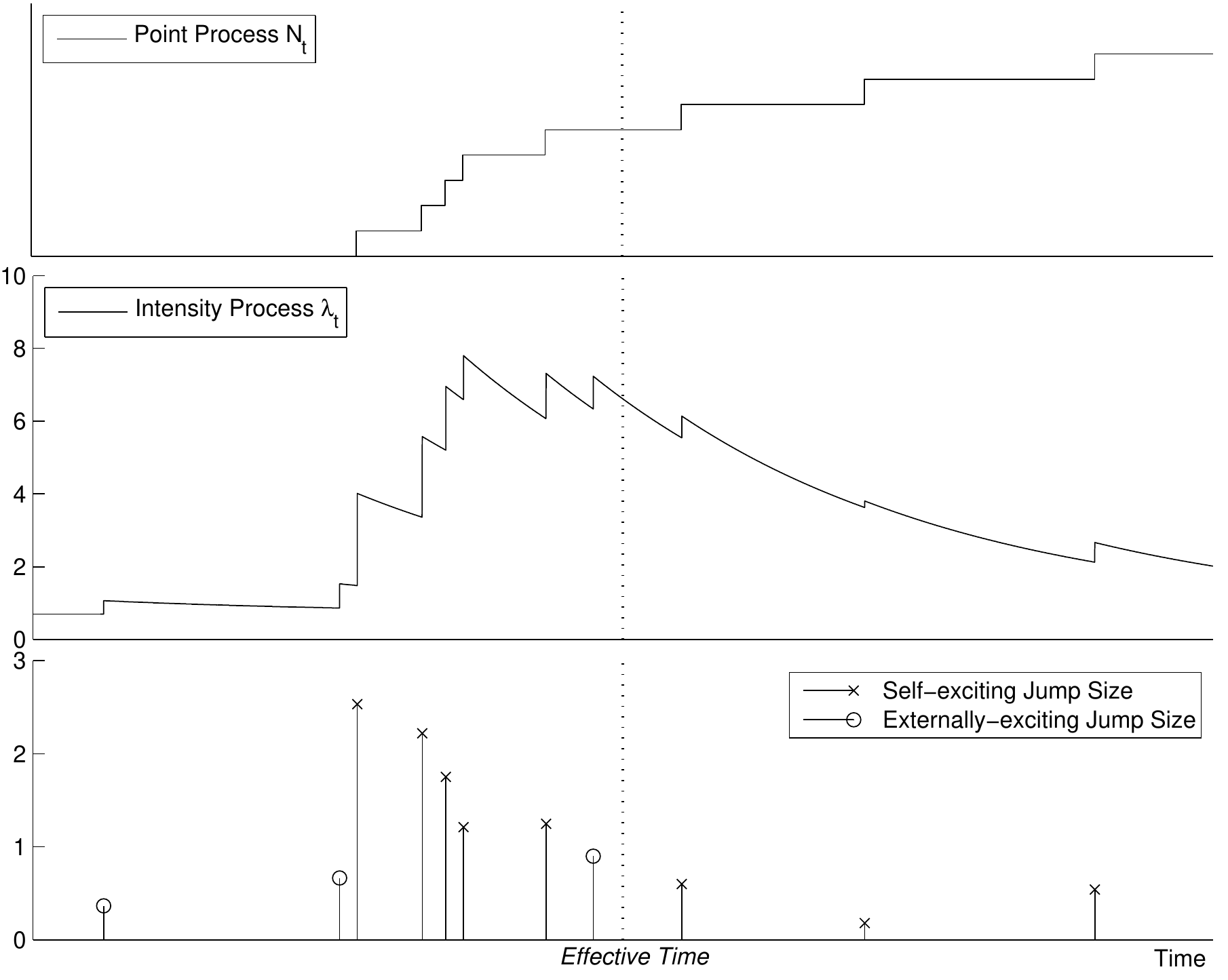}
    \caption{Two-phase dynamic contagion process}
    \label{Fig_plot_N_lambda_2PDCP}
 \end{center}
\end{figure}

The point process $N_t$ and its intensity process $\lambda_t$ are illustrated in Figure \ref{Fig_plot_N_lambda_2PDCP}.
Overall, we can more compactly define our new pandemic model, a two-phase dynamic contagion process, as  a counting process $N_t\equiv\{T_i\}_{i=1,...}$ with $N_0=0$ and stochastic intensity
\begin{equation}\label{Eq_DCP_intensity}
  \lambda_t
  ~=~
  \lambda_0 e^{-\delta t} + \sum_{k=1}^{N^*_t} Z_k \mathbf{1}_{\{t\le \ell \}} e^{-\delta (t-T_k^*)} + \sum_{i=1}^{N_t}Y_i e^{-\delta (t-T_i)},
  \qquad
  t\geq0,
\end{equation}
where
\begin{itemize}
  \item $\{Y_i\}_{i=1,...N_t}$ are i.i.d.  {\it self-exciting} jump sizes with a two-phase distribution $G(y;t)$, i.e.,
\begin{equation}\label{Eq_G}
G(y;t) = G_1(y)\textbf{1}_{\{t\leq \ell\}} + G_2(y)\textbf{1}_{\{t> \ell\}}.
\end{equation}

  \item $\{ Z_k \}_{k=1,...N_t^*}$ are i.i.d.  {\it externally-exciting} jump sizes  with  distribution $H(y)$.
\end{itemize}

This equivalent definition as a dynamic contagion process has an advantage: it can be viewed as a branching process and has a more intuitive interpretation with regard to a pandemic. The cluster-process presentation is provided as follows.
\begin{itemize}
  \item The cumulated number of infected cases, $N_t$, is a cluster point process, which consists of two types of points: {\it outside-imported} cases and  {\it inside-infected} cases.

  \item The arrivals of {\it outside-imported} cases follows a Cox process with shot-noise intensity
\[
\lambda_0 e^{-\delta t} + \sum_{k=1}^{N^*_t} Z_k  e^{-\delta (t-T_k^*)},
\]
where externally-exciting jumps arrive as a Poisson process $N^*_t$ at time points $\{T^*_k\}_{k=1,...}$ with sizes (marks) $\{Z_k\}_{k=1,...}$, and they disappear after time point $\ell$ when the interventions took effect, and there will be no any increase of imported cases in a long run.

  \item Each imported case may infect other individuals inside and thereby causes new cases, and each of these new cases would further infect others inside, and so on. The infection of any {\it new} cases caused by the {\it previous} infected cases follows a Cox process with exponentially decaying intensity $Y_{\cdot}e^{-\delta (t-T_{\cdot})}$, where $Y_{\cdot}$ follows a two-phase distribution $G(y;t)$ and $T_{\cdot}$ is the infection time of the {\it previous} infected case. After the interventions took effect, the COVID-19 becomes less easy to spread on average. This is captured by our assumption of two-phase distribution (\ref{Eq_G}) for $Y_{\cdot}$ here.

  \item Overall, the superposition of all these infected cases form a point process $N_t$, a two-phase dynamic contagion process with stochastic intensity (\ref{Eq_DCP_intensity}).
\end{itemize}

\section{Distributional Properties}\label{Sec_Distribution}
In this section, we outline key distributional properties for the two-phase dynamic contagion process.  We derived the conditional joint Laplace transform of $\lambda_t$ and probability generating function of $N_t$, which are the key results to further derive the elimination probability of the epidemic and the distribution of the final epidemic size.

\subsection*{Joint Distribution of   $(\lambda_t, N_t)$}
Let $\{\mathcal{F}_t\}_{t\ge 0}$ be the natural filtration of the point process $N_t$, i.e. $\mathcal{F}_t=\sigma\left(N_s, s\le t \right)$ and assume that the intensity process $\{\lambda_t\}_{t\ge 0}$ being $\mathcal{F}_t$-adapted. The joint Laplace transform and probability generating function for $(\lambda_t, N_t)$ is provided in Theorem \ref{thm_joint_lf_Nt_and_lambda_t} as below.

\begin{theorem}\label{thm_joint_lf_Nt_and_lambda_t}
 For time $s\le t$, the conditional joint Laplace transform and probability generating function for $\lambda_t$ and the point process $N_t$ is of the form
\begin{eqnarray}\label{eq_conditional_lf_pgf}
\mathbb{E}\left[ \theta^{N_t} e^{-v \lambda_t}|\mathcal{F}_s \right]
=
\begin{cases}
\theta^{N_s}e^{c(s)}e^{-A(s)\lambda_s},&\quad 0\le s\le t \le \ell,\\
 \theta ^ {N_s} e^{-A(s)\lambda_s}, &\quad \ell<s\le t,
\end{cases}
\end{eqnarray}
where $A(s)$ is determined by the nonlinear ordinary differential equation (ODE)
\begin{equation}\label{eq_ode_lf_pgf}
A'(s) -\delta A(s) +1 -\theta \hat{g}(A(s);s)=0,
\end{equation}
where the boundary condition is $A(t)=v$ with
\[
\hat{g}(u;t) = \int\limits_0^\infty e^{-uy}\mathrm{d}G(y;t),
\]
and $c(t)$ is determined by
\begin{equation}\label{eq_ode_lf_pgf_before_intervention}
c(t)=\varrho\mathbf{1}_{\{t\leq \ell\}}\int\limits_0^t\left[ 1-\hat{h}(A(u)) \right]\mathrm{d}u,
\end{equation}
with
\[
\hat{h}(u)=\int\limits_0^\infty e^{-uy}\mathrm{d}H(y).
\]
\end{theorem}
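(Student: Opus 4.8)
The plan is to treat $(t,\lambda_t,N_t)$ as a piecewise-deterministic Markov process and to read off the claimed transform from its infinitesimal generator, in the spirit of \citet{dassios2011dynamic}. First I would record the generator $\mathcal{A}$: on smooth functions $f(t,\lambda,n)$ it acts by the deterministic-decay part $\partial_t f-\delta\lambda\,\partial_\lambda f$, the self-exciting jump part $\lambda\int_0^\infty\big[f(t,\lambda+y,n+1)-f(t,\lambda,n)\big]\,\mathrm{d}G(y;t)$, and --- only while $t\le\ell$ --- the externally-exciting jump part $\varrho\int_0^\infty\big[f(t,\lambda+y,n)-f(t,\lambda,n)\big]\,\mathrm{d}H(y)$. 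Here $G(\cdot;t)$ is the two-phase law (\ref{Eq_G}), so the generator itself switches form at $t=\ell$ and the $\varrho$-term vanishes on $(\ell,\infty)$.

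Fix a terminal time $t$ and parameters $\theta\in[0,1]$, $v\ge 0$, and consider the exponential-affine candidate $f(s,\lambda,n)=\theta^{n}e^{c(s)}e^{-A(s)\lambda}$ for $s\le\ell$ and $f(s,\lambda,n)=\theta^{n}e^{-A(s)\lambda}$ for $s>\ell$, with $A(t)=v$. The key step is that $M_s:=f(s,\lambda_s,N_s)$ is a martingale precisely when $(\partial_s+\mathcal{A})f\equiv 0$ along the path; substituting the candidate, the factor $\theta^{n}e^{c(s)}e^{-A(s)\lambda}$ divides out and what remains is
\[
-A'(s)\lambda+c'(s)+\delta\lambda A(s)+\lambda\big[\theta\hat{g}(A(s);s)-1\big]+\varrho\,\mathbf{1}_{\{s\le\ell\}}\big[\hat{h}(A(s))-1\big]=0 .
\]
Collecting the coefficient of $\lambda$ gives exactly the ODE (\ref{eq_ode_lf_pgf}) with boundary value $A(t)=v$, while the $\lambda$-free part gives $c'(s)=\varrho\,\mathbf{1}_{\{s\le\ell\}}\big[1-\hat{h}(A(s))\big]$, which upon integration (the boundary value of $c$ being fixed by the requirement $M_t=\theta^{N_t}e^{-v\lambda_t}$) yields (\ref{eq_ode_lf_pgf_before_intervention}); on $(\ell,\infty)$ the $\varrho$-term is absent, so $c$ does not appear there, consistent with the second branch of (\ref{eq_conditional_lf_pgf}). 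Continuity of the conditional expectation in $s$ forces the two branches to agree at $s=\ell$, which simply means $A$ is continued through $\ell$ even though the right-hand side of (\ref{eq_ode_lf_pgf}) jumps there (from $\hat{g}$ built on $G_1$ to $\hat{g}$ built on $G_2$). With this in hand, the martingale identity $\mathbb{E}[M_t\mid\mathcal{F}_s]=M_s$ is precisely (\ref{eq_conditional_lf_pgf}).

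Two points need care to make this rigorous. The first is well-posedness of (\ref{eq_ode_lf_pgf}): its right-hand side is locally Lipschitz in $A$, and since $0\le\hat{g}(\cdot;s)\le 1$ one has $\delta A(s)-1\le A'(s)\le\delta A(s)$, which rules out finite-time blow-up and keeps $A(\cdot)\ge 0$ when solved backward from $A(t)=v\ge 0$; hence a unique solution exists on all of $[0,t]$, obtained by solving on $(\ell,t]$ (if $t>\ell$) and then on $[0,\ell]$ and pasting continuously. The second is upgrading $M_s$ from a local to a genuine martingale: for $\theta\in[0,1]$, $v\ge 0$ and $A\ge 0$ the process $M_s$ is bounded on $[0,t]$ by $\exp(\sup_{[0,\ell]}c)$, and $N_t$ is non-explosive (the exponential kernel keeps $\lambda_t$ finite on compacts), so a localisation argument together with bounded convergence gives the martingale property and hence (\ref{eq_conditional_lf_pgf}) for all $\theta\in[0,1]$ and $v\ge 0$.

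The main obstacle is genuinely the ODE (\ref{eq_ode_lf_pgf}): because $\hat{g}$ is an arbitrary Laplace--Stieltjes transform it is a nonlinear (Riccati-type only in special cases such as exponential jumps) equation with no closed-form solution, so existence, non-negativity and the no-blow-up bound must be extracted purely from the structure $\hat{g},\hat{h}\in[0,1]$, and the discontinuity of the coefficients at $s=\ell$ must be handled by the splicing argument above. By contrast, writing down the generator, separating the $\lambda$-linear and $\lambda$-free parts, and identifying the boundary and matching conditions are routine computations; the identity could alternatively be derived from the branching/cluster representation by conditioning on the immigrant arrivals, but the generator route is the most economical.
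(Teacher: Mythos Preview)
Your proposal is correct and follows essentially the same generator/exponential-affine martingale approach as the paper: the paper writes down the generator (\ref{eq_generator_phase2}), substitutes the ansatz $\theta^{n}e^{-A(s)\lambda}$, and reads off the ODE, while delegating the Phase~1 case (and hence the $c$-term) to \citet[Theorem~3.1]{dassios2011dynamic}. Your write-up is in fact more self-contained and more careful---you treat both phases explicitly, address well-posedness and non-negativity of $A$, and justify the local-to-true martingale step---none of which the paper spells out; one trivial slip is that since your $\mathcal{A}$ already contains $\partial_t$, the martingale condition is $\mathcal{A}f=0$ rather than $(\partial_s+\mathcal{A})f=0$, though your displayed equation is correct.
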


\begin{proof}
For $0\le s\le t\le \ell$,   $\lambda_t$ is the intensity process of the dynamic contagion process introduced in \citet{dassios2011dynamic}. The corresponding conditional joint Laplace transform, probability generating function for the process $\lambda_t$ and the point process $N_t$ is provided in Theorem 3.1 of \citet{dassios2011dynamic}. For $\ell<s\le t$, given $\mathcal{F}_s$, the infinitesimal generator of the dynamic contagion process $(\lambda_s, N_s, s)$ acting on a function $f(\lambda,n, s)$ within its domain $\Omega(\mathcal{A})$ is given by
\begin{equation}\label{eq_generator_phase2}
\mathcal{A}f(\lambda, n, s)=\frac{\partial f}{\partial s}-\delta \lambda\frac{\partial f}{\partial \lambda}+\lambda\left(\int\limits_0^\infty f(\lambda+y, n+1, s)\mathrm{d}G(y;s) -f(\lambda,n, s) \right).
\end{equation}
Consider a function $f(\lambda, n, s)$ of form
\[
f(\lambda, n , s) = \theta ^ n e^{-A(s)\lambda},
\]
and substitute this into $\mathcal{A}f=0$, we then have the  ODE
\[
A'(s) =\delta A(s) -1 +\theta \hat{g}\big(A(s),s\big),
\]
adding the boundary condition $A(t)=v$, gives the ODE in \eqref{eq_ode_lf_pgf}.
\end{proof}

The moments of $\lambda_t$ and $N_t$ can be obtained by differentiating the joint Laplace transform and probability generating function of $\lambda_t$ and $N_t$, and the results are provided in Proposition \ref{eq_condtional_expection_lambda_n} and \ref {eq_condtional_expection2_lambda_n} as below.

\begin{proposition}\label{eq_condtional_expection_lambda_n}
The conditional expectation of the process $\lambda_t$ given  $\mathcal{F}_s$ for $s\le t$ is given by
\begin{eqnarray}\label{eq_conditional_expectation_on_lambda_t}
\mathbb{E}\left[\lambda_t|\mathcal{F}_s \right]=
\begin{cases}
 \frac{\varrho\mu_{H} \mathbf{1}_{\{t\le \ell \} }}{\kappa }+\left( \lambda_s -  \frac{\varrho\mu_{H} \mathbf{1}_{\{t\le \ell \} }}{\kappa  } \right) e^{-\kappa(t-s)},  &\quad \kappa \neq 0,  \\
 \lambda_s+\varrho\mu_{H} \mathbf{1}_{\{t\le \ell \} }(t-s), &\quad  \kappa= 0,
 \end{cases}
\end{eqnarray}
and the conditional expectation of the point process $N_t $ given $\mathcal{F}_s$ is of the form
\begin{eqnarray}\label{eq_conditional_expectation_Nt}
\mathbb{E}\left[N_t|\mathcal{F}_s\right]=
\begin{cases}
N_s+ \frac{\varrho\mu_{H} \mathbf{1}_{\{t\le \ell \} }(t-s)}{\kappa }+\left( \lambda_s -  \frac{\varrho\mu_{H} \mathbf{1}_{\{t\le \ell \} }}{\kappa } \right) \frac{1-e^{-\kappa(t-s)}}{ \kappa },  &\quad \kappa \neq 0,  \\
 N_s+\lambda_s(t-s)+\frac{1}{2}\varrho\mu_{H} \mathbf{1}_{\{t\le \ell \} }(t-s)^2, &\quad  \kappa= 0,
 \end{cases}
\end{eqnarray}
where
\[
\mu_H=\int\limits_0^\infty y\mathrm{d}H(y),\quad \mu_G=\int\limits_0^\infty y\mathrm{d}G_1(y) \mathbf{1}_{\{t\le  \ell\}} +\int\limits_0^\infty y\mathrm{d}G_2(y)\mathbf{1}_{\{t> \ell\}},
\]
and $\kappa=\delta-\mu_G$.
\end{proposition}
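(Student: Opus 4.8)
The plan is to read both conditional moments off the infinitesimal generator, using that $\mathcal{A}$ acts very simply on the linear test functions $f(\lambda,n,s)=\lambda$ and $f(\lambda,n,s)=n$. In the first phase the generator is that of the full dynamic contagion process,
\[
\mathcal{A}f=\frac{\partial f}{\partial s}-\delta\lambda\frac{\partial f}{\partial\lambda}+\lambda\Big(\int_0^\infty f(\lambda+y,n+1,s)\,\mathrm{d}G_1(y)-f\Big)+\varrho\Big(\int_0^\infty f(\lambda+y,n,s)\,\mathrm{d}H(y)-f\Big),
\]
and in the second phase it is \eqref{eq_generator_phase2} with $G=G_2$ and the $\varrho$–term removed. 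A standard localisation argument (as in \citet{dassios2011dynamic}) shows that $f(\lambda_t,N_t,t)-\int_s^t\mathcal{A}f(\lambda_u,N_u,u)\,\mathrm{d}u$ is an $\mathcal{F}_t$–martingale for these linearly growing $f$, so that $\frac{\mathrm{d}}{\mathrm{d}t}\mathbb{E}[f(\lambda_t,N_t,t)\mid\mathcal{F}_s]=\mathbb{E}[\mathcal{A}f(\lambda_t,N_t,t)\mid\mathcal{F}_s]$; the integrability needed for this is exactly finiteness of $\mathbb{E}[\lambda_t]$ and $\mathbb{E}[N_t]$, which one first secures by a short a priori Gronwall estimate on $\mathbb{E}[\lambda_t\wedge M]$ and then lets $M\uparrow\infty$.

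Taking $f(\lambda,n,s)=\lambda$ gives $\partial_s f=0$, $\partial_\lambda f=1$, $\int(\lambda+y)\,\mathrm{d}G(y;s)-\lambda=\mu_G$ and $\int(\lambda+y)\,\mathrm{d}H(y)-\lambda=\mu_H$, so that $\mathcal{A}f=-(\delta-\mu_G)\lambda+\varrho\mu_H\mathbf{1}_{\{s\le\ell\}}=-\kappa\lambda+\varrho\mu_H\mathbf{1}_{\{s\le\ell\}}$. Hence $m(t):=\mathbb{E}[\lambda_t\mid\mathcal{F}_s]$ solves the linear ODE $m'(t)=-\kappa m(t)+\varrho\mu_H\mathbf{1}_{\{t\le\ell\}}$ with $m(s)=\lambda_s$; solving it within a phase, where $\kappa$ is constant, yields \eqref{eq_conditional_expectation_on_lambda_t}, the $\kappa=0$ line being the degenerate case of the same equation. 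Taking $f(\lambda,n,s)=n$ gives $\partial_s f=\partial_\lambda f=0$, $\int f(\lambda+y,n+1,s)\,\mathrm{d}G(y;s)-f=1$ and a vanishing $H$–term (imported points do not increment $N$), so $\mathcal{A}f=\lambda$ and therefore $\frac{\mathrm{d}}{\mathrm{d}t}\mathbb{E}[N_t\mid\mathcal{F}_s]=m(t)$ with $\mathbb{E}[N_s\mid\mathcal{F}_s]=N_s$; integrating the expression for $m(t)$ gives \eqref{eq_conditional_expectation_Nt}, again with the $\kappa=0$ line following by direct integration.

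An alternative route, avoiding a fresh martingale argument, is to differentiate the closed form \eqref{eq_conditional_lf_pgf}: $\mathbb{E}[\lambda_t\mid\mathcal{F}_s]$ and $\mathbb{E}[N_t\mid\mathcal{F}_s]$ are recovered by applying $-\partial_v$ and $\partial_\theta$ to its right-hand side at $(\theta,v)=(1,0)$, where $A(s)\equiv 0$ and $c(s)\equiv 0$; differentiating the ODEs \eqref{eq_ode_lf_pgf}--\eqref{eq_ode_lf_pgf_before_intervention} then produces linear ODEs for $\partial_v A$, $\partial_\theta A$ and $\partial_\theta c$ at that point (using $\hat g'(0;s)=-\mu_G$ and $\hat h'(0)=-\mu_H$), which integrate to the same formulae. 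The only genuine difficulty is bookkeeping at the cutoff $\ell$, not anything conceptual: when $s\le\ell<t$ the coefficient $\kappa$ (through $\mu_G$) and the forcing $\varrho\mu_H\mathbf{1}_{\{\cdot\le\ell\}}$ jump at $\ell$, so one solves the ODE on $[s,\ell]$ and on $(\ell,t]$ separately, matching the value at $\ell$, and the displayed formulae are to be read phase by phase; within a single phase the computation is the routine linear first-order solve indicated above.
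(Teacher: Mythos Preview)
Your proposal is correct and is essentially the same approach as the paper's, only made explicit: the paper obtains \eqref{eq_conditional_expectation_on_lambda_t} by citing the generator computation carried out in \citet{dassios2011dynamic} (Theorem~3.6 there), and then derives \eqref{eq_conditional_expectation_Nt} from the compensator martingale $N_t-\int_s^t\lambda_u\,\mathrm{d}u$, which is exactly your observation that $\mathcal{A}n=\lambda$. Your alternative route via differentiating \eqref{eq_conditional_lf_pgf} at $(\theta,v)=(1,0)$ and your remark about matching at $\ell$ are helpful additions but not a different strategy.
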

\begin{proof}
The result in \eqref{eq_conditional_expectation_on_lambda_t} immediately follows Theorem 3.6 in \citet{dassios2011dynamic}. And since $N_t-N_s-\displaystyle \int\limits_s^t\lambda_u\mathrm{d}u$ is a martingale, we have
\[
\mathbb{E}\left[ N_t-N_s|\mathcal{F}_s\right]=\int\limits_s^t \mathbb{E}\left[ \lambda_u|\mathcal{F}_s\right] \mathrm{d}u,
\]
which directly implies the result in \eqref{eq_conditional_expectation_Nt}.
\end{proof}

\begin{proposition}\label{eq_condtional_expection2_lambda_n}
The conditional second moment of the process $\lambda_t$ given $\mathcal{F}_s$ for $s \le t $ is given by
\begin{eqnarray}\label{eq_conditional_second_on_lambda_t}
\mathbb{E}\left[\lambda^2_t|\mathcal{F}_s \right]=
\begin{cases}
 \lambda_s^2 e^{-2\kappa t} + \frac{2\varrho\mu_{H} \mathbf{1}_{\{t\le \ell \} } + \mu_{2_G}}{\kappa} \left( \lambda_s - \frac{\varrho  \mu_{H}}{\kappa} \right)  \left( e^{-\kappa (t-s)} - e^{-2\kappa (t-s)} \right) \\ \quad + \left( \frac{(2\varrho\mu_{H} + \mu_{2_G})\varrho\mu_{H}  \mathbf{1}_{\{t\le \ell \} }}{2\kappa^2} +  \frac{\varrho\mu_{2_H} \mathbf{1}_{\{t\le \ell \} } }{2\kappa } \right) \left(1-e^{-\kappa(t-s)} \right),  &\quad \kappa\neq 0,  \\
 \lambda^2_s+ \lambda_s\mu_{2_G}t +  \left( 2\lambda_0\mu_{H} + \mu_{2_H}\right)\varrho\mathbf{1}_{\{t\le \ell \} }(t-s) \\  \quad+ \left( \varrho^2\mu_{H}^2 + \frac{1}{2}\varrho\mu_{H}\mu_{2_G}\right) \mathbf{1}_{\{t\le \ell \} }(t-s)^2, &\quad  \kappa =0,
 \end{cases}
\end{eqnarray}
and the conditional joint expectation of the process $\lambda_t$ and the point process $N_t $ given $\mathcal{F}_s$ for $s > \ell $ is of the form
\begin{eqnarray}\label{eq_joint_conditional_expectation_lambdat_Nt}
\mathbb{E}\left[\lambda_t N_t|\mathcal{F}_s\right]=
\begin{cases}
  \lambda_s N_s e^{-\kappa (t-s)} + \left( \lambda_s\mu_{G} + \frac{\lambda_s\mu_{2_G}}{\kappa} \right) (t-s) e^{-\kappa (t-s)} \\ \quad + \left(\frac{\lambda_s^2}{\kappa} - \frac{\lambda_s \mu_{2_G}}{\kappa} \right)(e^{-\kappa (t-s)} -e^{-2\kappa (t-s)}), &\quad \kappa\neq 0, \\ \lambda_s N_s +
 \left( \lambda_s^2 + \lambda_s\mu_{G} \right)(t-s) + \frac{1}{2}\lambda_s\mu_{2_G}(t-s)^2, &\quad  \kappa=0
 \end{cases}
\end{eqnarray}
and the second moment of point process $N_t $ given $\mathcal{F}_s$ for $s > \ell $ is of the form
\begin{eqnarray}\label{eq_joint_conditional_expectation_lambdat_Nt}
\mathbb{E}\left[N_t^2|\mathcal{F}_s\right]=
\begin{cases}
  \left(\frac{\lambda_s\mu_{2_G}}{\kappa^3}-\frac{\lambda_s^2}{\kappa^2} \right) (1-e^{-2\kappa (t-s)}) - \left(\frac{2\lambda_s\mu_{G}}{\kappa} + \frac{2\lambda_s\mu_{G_2}}{\kappa^2}\right) (t-s)e^{-\kappa(t-s)} \\ \quad + \left( \frac{\lambda_s+2\lambda_s N_s}{\kappa} + \frac{2\lambda_s^2 + 2\lambda_s\mu_{2_G}}{\kappa^2} \right)(1-e^{-\kappa (t-s)}), &\quad \kappa\neq 0,  \\
 \lambda_s(\mu_{G} + 2N_s)(t-s) + \left( \lambda_s^2 + \lambda_s\mu_{G}\right) (t-s)^2 + \frac{1}{3}\lambda_s\mu_{2_G} (t-s)^3, &\quad  \kappa=0,
 \end{cases}
\end{eqnarray}
where
\[ 
\mu_{2_H}=\int\limits_0^\infty y^2\mathrm{d}H(y),\quad \mu_{2_G}=\int\limits_0^\infty y^2\mathrm{d}G_1(y)\mathbf{1}_{\{t\le \ell\}} +\int\limits_0^\infty y^2\mathrm{d}G_2(y)\mathbf{1}_{\{t> \ell\}},
\]
and $\kappa = \delta - \mu_{G}$.
\end{proposition}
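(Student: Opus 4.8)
The plan is to follow the template of the proof of Proposition~\ref{eq_condtional_expection_lambda_n} and build the four second-order quantities as a \emph{cascade} of scalar linear first-order ODEs obtained from the infinitesimal generator, each forced by moments computed at earlier steps. For a function $f(\lambda,n,s)$ in the domain of the generator, $f(\lambda_t,N_t,t)-\int_s^t\mathcal{A}f(\lambda_u,N_u,u)\,\mathrm{d}u$ is an $\mathcal{F}_s$-martingale, so $\frac{\mathrm{d}}{\mathrm{d}t}\mathbb{E}[f(\lambda_t,N_t,t)\mid\mathcal{F}_s]=\mathbb{E}[\mathcal{A}f(\lambda_t,N_t,t)\mid\mathcal{F}_s]$, where $\mathcal{A}$ is the Phase-2 generator \eqref{eq_generator_phase2} when $t>\ell$ and its Phase-1 counterpart from \citet{dassios2011dynamic} (carrying the extra term $\varrho\int_0^\infty[f(\lambda+y,n,s)-f(\lambda,n,s)]\,\mathrm{d}H(y)$) when $t\le\ell$.

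I would first apply $\mathcal{A}$ to $f=\lambda^2$: using $\int_0^\infty(\lambda+y)^2\,\mathrm{d}G(y;s)=\lambda^2+2\lambda\mu_G+\mu_{2_G}$ gives $\mathcal{A}\lambda^2=-2\kappa\lambda^2+(\mu_{2_G}+2\varrho\mu_H\mathbf{1}_{\{t\le\ell\}})\lambda+\varrho\mu_{2_H}\mathbf{1}_{\{t\le\ell\}}$, so
\[
\frac{\mathrm{d}}{\mathrm{d}t}\mathbb{E}[\lambda_t^2\mid\mathcal{F}_s]=-2\kappa\,\mathbb{E}[\lambda_t^2\mid\mathcal{F}_s]+(\mu_{2_G}+2\varrho\mu_H\mathbf{1}_{\{t\le\ell\}})\,\mathbb{E}[\lambda_t\mid\mathcal{F}_s]+\varrho\mu_{2_H}\mathbf{1}_{\{t\le\ell\}},
\]
with initial value $\mathbb{E}[\lambda_s^2\mid\mathcal{F}_s]=\lambda_s^2$; the forcing term $\mathbb{E}[\lambda_t\mid\mathcal{F}_s]$ is supplied by Proposition~\ref{eq_condtional_expection_lambda_n}, so this linear ODE is solved by the integrating factor $e^{2\kappa t}$, giving \eqref{eq_conditional_second_on_lambda_t}. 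Next, restricting to $s>\ell$ so that only the Phase-2 generator acts, I would apply $\mathcal{A}$ to $f=\lambda n$: from $\int_0^\infty(\lambda+y)(n+1)\,\mathrm{d}G_2(y)=(n+1)(\lambda+\mu_G)$ one gets $\mathcal{A}(\lambda n)=-\kappa\lambda n+\lambda^2+\mu_G\lambda$, hence $\frac{\mathrm{d}}{\mathrm{d}t}\mathbb{E}[\lambda_tN_t\mid\mathcal{F}_s]=-\kappa\,\mathbb{E}[\lambda_tN_t\mid\mathcal{F}_s]+\mathbb{E}[\lambda_t^2\mid\mathcal{F}_s]+\mu_G\,\mathbb{E}[\lambda_t\mid\mathcal{F}_s]$ with $\mathbb{E}[\lambda_sN_s\mid\mathcal{F}_s]=\lambda_sN_s$, again linear with the two lower moments (the previous step and Proposition~\ref{eq_condtional_expection_lambda_n} with the indicators off) as forcing. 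Finally, $\mathcal{A}$ applied to $f=n^2$ yields $\mathcal{A}(n^2)=2\lambda n+\lambda$, so $\frac{\mathrm{d}}{\mathrm{d}t}\mathbb{E}[N_t^2\mid\mathcal{F}_s]=2\,\mathbb{E}[\lambda_tN_t\mid\mathcal{F}_s]+\mathbb{E}[\lambda_t\mid\mathcal{F}_s]$ with $\mathbb{E}[N_s^2\mid\mathcal{F}_s]=N_s^2$, which is solved by direct integration. (Equivalently, all four quantities can be read off from Theorem~\ref{thm_joint_lf_Nt_and_lambda_t} by expanding $\mathbb{E}[\theta^{N_t}e^{-v\lambda_t}\mid\mathcal{F}_s]$ to second order around $(v,\theta)=(0,1)$ via $\partial_v^2$, $\partial_v\partial_\theta$, $\partial_\theta^2$, after Taylor-expanding $A$ and $c$ in \eqref{eq_ode_lf_pgf}--\eqref{eq_ode_lf_pgf_before_intervention}; the generator route avoids carrying that quadratic expansion.)

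I expect no conceptual obstacle, since every ODE is scalar and linear, so the real work is bookkeeping, concentrated in two places. The first is the degenerate case $\kappa=0$: there the homogeneous mode resonates with the forcing, so the exponentials $e^{-\kappa(t-s)}$, $(t-s)e^{-\kappa(t-s)}$, $e^{-2\kappa(t-s)}$ collapse into the polynomials in $(t-s)$ appearing in the $\kappa=0$ branches --- cleanest to obtain either by re-solving the ODEs directly with $\kappa=0$ or by taking $\kappa\to0$ in the $\kappa\neq0$ formulae. The second is $\mathbb{E}[\lambda_t^2\mid\mathcal{F}_s]$ for $s\le\ell<t$, where one must splice the Phase-1 solution evaluated at $\ell$ onto the Phase-2 evolution on $(\ell,t]$, using continuity of $\lambda_\cdot$ at $\ell$; everything else is routine integration.
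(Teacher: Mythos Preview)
Your generator--ODE cascade is correct and is precisely the machinery that underlies the results the paper invokes. The paper's own proof is a single sentence: it just cites Lemma~3.1, Lemma~3.2 and Theorem~3.9 of \citet{dassios2011dynamic}, where the second-moment and cross-moment formulae for the dynamic contagion process are established (by exactly the kind of Dynkin/martingale argument you outline). So you are not taking a different route so much as unpacking the citation; your version is self-contained, the paper's is not.

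One minor remark: your alternative suggestion of differentiating the joint transform in Theorem~\ref{thm_joint_lf_Nt_and_lambda_t} to second order in $(v,\theta)$ is also how one could read these moments off, and is arguably closer in spirit to how the paper presents things (Theorem~\ref{thm_joint_lf_Nt_and_lambda_t} $\Rightarrow$ moments), but either path lands on the same linear ODE system you wrote down.
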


\begin{proof}
These results immediately follows Lemma 3.1, Lemma 3.2 and Theorem 3.9 in \citet{dassios2011dynamic}.
\end{proof}

\subsection*{Probability of Elimination Time $\widetilde{T}$ }
After the government interventions come into effect, the contagion rate will dramatically decline and new cases will drop abruptly almost to  nothing in the near future. It is therefore of great interests to calculate the probability of  elimination time, i.e. the time that the last ever case  arrives, after government interventions come into effect. Let $\widetilde{T}$ to be {\it elimination time} such that
\begin{equation}
\widetilde{T}
:=
\inf\big\{ t> \ell : \forall s\ge t, \quad N_s-N_t=0 \big\}.
\end{equation}
The condition probability of the elimination time is provided in Proposition
\ref{prop_distribution_of_elimination_time}.

\begin{proposition}\label{prop_distribution_of_elimination_time}
For $\ell\le s\le t$, the elimination probability is given by
\begin{equation}
\mathbb{P}\left( \widetilde{T}\le t|\mathcal{F}_s\right) = e^{-A(s)\lambda_s},
\end{equation}
where $A(s)$ is determined by the ODE in \eqref{eq_ode_lf_pgf} with boundary condition $A(t)=\frac{1}{\delta}$.
\end{proposition}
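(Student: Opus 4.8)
The plan is to reduce everything to the joint transform already established in Theorem \ref{thm_joint_lf_Nt_and_lambda_t}. First note that, by the very definition of $\widetilde{T}$, the event $\{\widetilde{T}\le t\}$ coincides with the event that $N$ has no arrival in $(t,\infty)$, i.e. $N_u=N_t$ for all $u\ge t$. Hence the quantity to be computed is $\mathbb{P}\big(N_u=N_t\ \forall u\ge t\,\big|\,\mathcal{F}_s\big)$ for $\ell\le s\le t$.

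The first key step is the special case $s=t$, where I claim $\mathbb{P}(\widetilde{T}\le t\mid\mathcal{F}_t)=e^{-\lambda_t/\delta}$. Indeed, for $u>t$, on the event that no arrival has occurred in $(t,u]$ there is no new self-excitation and (since $t\ge\ell$) no external excitation, so the intensity is the deterministic function $\lambda_u=\lambda_t e^{-\delta(u-t)}$. Writing $\tau$ for the first arrival time after $t$, the $\mathcal{F}_t$-conditional survival function is therefore
\[
\mathbb{P}(\tau>T\mid\mathcal{F}_t)
=\exp\!\Big(-\int_t^T \lambda_t e^{-\delta(u-t)}\,\mathrm{d}u\Big)
=\exp\!\Big(-\tfrac{\lambda_t}{\delta}\big(1-e^{-\delta(T-t)}\big)\Big),
\]
and letting $T\to\infty$ gives $\mathbb{P}(\tau=\infty\mid\mathcal{F}_t)=e^{-\lambda_t/\delta}$, with $\{\tau=\infty\}=\{\widetilde{T}\le t\}$. (Equivalently, one can bypass this intensity-freezing argument by taking $v=0$ in Theorem \ref{thm_joint_lf_Nt_and_lambda_t}, sending $\theta\downarrow 0$ to isolate $\mathbb{P}(N_T-N_t=0\mid\mathcal{F}_t)$ and then $T\to\infty$, using that the solution $A$ of \eqref{eq_ode_lf_pgf} with $\theta=0$ and $A(T)=0$ is $A(s)=\tfrac1\delta(1-e^{-\delta(T-s)})\to\tfrac1\delta$.)

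The second step propagates this back to general $s\in[\ell,t]$ via the tower property: since $\{\widetilde{T}\le t\}\in\mathcal{F}_\infty$ and $\mathcal{F}_s\subset\mathcal{F}_t$,
\[
\mathbb{P}(\widetilde{T}\le t\mid\mathcal{F}_s)
=\mathbb{E}\big[\mathbb{P}(\widetilde{T}\le t\mid\mathcal{F}_t)\,\big|\,\mathcal{F}_s\big]
=\mathbb{E}\big[e^{-\lambda_t/\delta}\,\big|\,\mathcal{F}_s\big].
\]
Now apply Theorem \ref{thm_joint_lf_Nt_and_lambda_t} with $\theta=1$ and $v=1/\delta$ (so the generating factor $1^{N_s}$ disappears, and $\hat{g}(\cdot;u)$ reduces to the Laplace transform of $G_2$ for $u\in(\ell,t]$): this yields $\mathbb{E}[e^{-\lambda_t/\delta}\mid\mathcal{F}_s]=e^{-A(s)\lambda_s}$, where $A$ solves the ODE \eqref{eq_ode_lf_pgf} with boundary condition $A(t)=v=1/\delta$, which is exactly the claimed formula.

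The main obstacle is making the intensity-freezing step rigorous, i.e. justifying that conditionally on $\mathcal{F}_t$ and on no arrival in $(t,u]$ the intensity is exactly $\lambda_t e^{-\delta(u-t)}$ and that the post-$t$ survival probability is genuinely $\exp\!\big(-\int_t^\infty\lambda_t e^{-\delta(u-t)}\,\mathrm{d}u\big)$ — a standard consequence of the stochastic-intensity (time-changed Poisson) representation of the process up to its first post-$t$ jump. Everything else is routine: the tower step is immediate, and the alternative $\theta\downarrow0$, $T\to\infty$ route needs only a dominated-convergence argument to interchange the limits with the conditional expectation.
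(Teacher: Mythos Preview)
Your proof is correct and follows essentially the same approach as the paper: identify $\{\widetilde{T}\le t\}$ with the no-arrival event after $t$, use the exponential decay of the intensity on that event to obtain $\mathbb{P}(\widetilde{T}\le t\mid\mathcal{F}_t)=e^{-\lambda_t/\delta}$, and then invoke Theorem~\ref{thm_joint_lf_Nt_and_lambda_t} with $\theta=1$, $v=1/\delta$. Your explicit two-step structure (first $s=t$, then the tower property) and the alternative $\theta\downarrow 0$, $T\to\infty$ route are slightly more detailed than the paper's presentation, but the underlying argument is the same.
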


\begin{proof}
Given $\widetilde{T}$ being the timing of the last ever event,  the event $\{\widetilde{T}\le t\}$ implies that $N_u=N_t$ for any $u\ge t$, which also lead to
\[
\lambda_u=e^{-\delta(u-t)}\lambda_t.
\]
Hence, we have
\begin{eqnarray}
\mathbb{P}\left( \widetilde{T}\le t|\mathcal{F}_s\right)
&=&
\mathbb{E}\left[ \mathbf{1}_{\{ \widetilde{T}\le t\}}\mid \mathcal{F}_s \right]\nonumber\\
&=&
\mathbb{E}\left[ \exp\left(-\int\limits_t^\infty \lambda_t e^{-\delta(u-t)}\mathrm{d}u \right) \mid \mathcal{F}_s \right]\nonumber\\
&=&
\mathbb{E}\left[  \exp\left( -\frac{\lambda_t}{\delta}\right) \mid \mathcal{F}_s\right]
\end{eqnarray}
And according to Theorem  \ref{thm_joint_lf_Nt_and_lambda_t}, by setting $\theta=1$ in \eqref{eq_conditional_lf_pgf}, the result follows immediately.
\end{proof}

\subsection*{Joint Expectation of Epidemic Size $N_t$ and Elimination Time $\widetilde{T}$ }
Given the last ever event $\{ \widetilde{T}<t\}$, one could obtain the expected size of the epidemic at time $t$. The relevant details are presented in Corollary \ref{prop_expectation_epidemic_size}.

\begin{corollary}\label{prop_expectation_epidemic_size}
For $\ell\le s\le \widetilde{T}\le t$, the conditional joint expectation of $N_t$  and $\mathbf{1}_{\{\widetilde{T}\le t\}}$ is of the following form
 \begin{eqnarray}\label{eq_expectation_of_epidemic_size}
\mathbb{E}\left[N_t\mathbf{1}_{\{\widetilde{T}\le t\}} \mid \mathcal{F}_s \right] = \frac{\mathrm{d} } {\mathrm{d}\theta} \left\{\theta^{N_s} e^{-A(s)\lambda_s} \right\} \bigg|_{\theta=1^-},
\end{eqnarray}
where $A(s)$ satisfies the ODE in \eqref{eq_ode_lf_pgf} with boundary condition $A(t)=\frac{1}{\delta}$.

\end{corollary}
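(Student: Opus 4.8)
The plan is to reduce Corollary~\ref{prop_expectation_epidemic_size} directly to the probability-generating-function machinery already set up in Theorem~\ref{thm_joint_lf_Nt_and_lambda_t} and exploited in Proposition~\ref{prop_distribution_of_elimination_time}. The starting observation is that, just as in the proof of Proposition~\ref{prop_distribution_of_elimination_time}, the event $\{\widetilde T\le t\}$ given $\mathcal F_s$ is captured inside the conditional joint transform $\mathbb E[\theta^{N_t}e^{-v\lambda_t}\mid\mathcal F_s]$ by taking the boundary value $v=1/\delta$: indeed, on $\{\widetilde T\le t\}$ we have $N_u=N_t$ and $\lambda_u=e^{-\delta(u-t)}\lambda_t$ for all $u\ge t$, so $\exp(-\lambda_t/\delta)=\exp(-\int_t^\infty\lambda_u\,\mathrm du)$ is exactly the conditional probability that no further jump occurs after $t$. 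Hence, for $\ell<s\le t$,
\[
\mathbb E\!\left[\theta^{N_t}\mathbf 1_{\{\widetilde T\le t\}}\mid\mathcal F_s\right]
=\mathbb E\!\left[\theta^{N_t}e^{-\lambda_t/\delta}\mid\mathcal F_s\right]
=\theta^{N_s}e^{-A(s)\lambda_s},
\]
where $A$ solves the ODE \eqref{eq_ode_lf_pgf} with boundary condition $A(t)=1/\delta$; this is precisely the second branch of \eqref{eq_conditional_lf_pgf} evaluated at $v=1/\delta$.

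Next I would differentiate this identity in $\theta$ and evaluate at $\theta=1^-$. On the left-hand side, $\frac{\mathrm d}{\mathrm d\theta}\theta^{N_t}=N_t\theta^{N_t-1}$, and a dominated-convergence / monotone-convergence argument (the summands are nonnegative and $\theta\uparrow1$) lets me interchange the derivative and the conditional expectation, giving $\frac{\mathrm d}{\mathrm d\theta}\mathbb E[\theta^{N_t}\mathbf 1_{\{\widetilde T\le t\}}\mid\mathcal F_s]\big|_{\theta=1^-}=\mathbb E[N_t\mathbf 1_{\{\widetilde T\le t\}}\mid\mathcal F_s]$. On the right-hand side I simply differentiate the closed-form expression $\theta^{N_s}e^{-A(s)\lambda_s}$, remembering that $A(s)=A(s;\theta)$ depends on $\theta$ through the ODE. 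Equating the two sides yields exactly \eqref{eq_expectation_of_epidemic_size}. So the corollary is essentially a one-line consequence once the transform identity with $v=1/\delta$ is in place; it is really a corollary of Theorem~\ref{thm_joint_lf_Nt_and_lambda_t} combined with the reasoning in Proposition~\ref{prop_distribution_of_elimination_time}.

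The only genuinely delicate point — and the one I would be most careful about — is the justification of differentiating under the conditional expectation at the boundary point $\theta=1^-$, together with the implicit claim that $\theta\mapsto A(s;\theta)$ is differentiable there with a finite derivative (equivalently, that $\mathbb E[N_t\mathbf 1_{\{\widetilde T\le t\}}\mid\mathcal F_s]<\infty$). Finiteness of this moment is exactly what makes the formula meaningful, and it is plausible in the post-intervention regime since the self-exciting mean is assumed to be small, putting Phase~2 in the subcritical branching regime where $N_\infty-N_\ell$ has finite expectation (cf.\ the second-moment formulas in Proposition~\ref{eq_condtional_expection2_lambda_n}); I would invoke that subcriticality ($\mu_{G_2}<\delta$, i.e.\ $\kappa>0$) to bound the tail and legitimise the interchange. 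Everything else — smooth dependence of ODE solutions on parameters, the product rule on $\theta^{N_s}e^{-A(s)\lambda_s}$ — is routine. I would therefore present the proof as: (i) rewrite $\mathbf 1_{\{\widetilde T\le t\}}$ via $e^{-\lambda_t/\delta}$ as in Proposition~\ref{prop_distribution_of_elimination_time}; (ii) apply Theorem~\ref{thm_joint_lf_Nt_and_lambda_t} with $v=1/\delta$ to get the generating-function form; (iii) differentiate in $\theta$ at $1^-$, noting the interchange is valid by nonnegativity and subcriticality; (iv) conclude.
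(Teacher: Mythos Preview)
Your proposal is correct and follows essentially the same approach as the paper: rewrite $\mathbf 1_{\{\widetilde T\le t\}}$ as $e^{-\lambda_t/\delta}$ via Proposition~\ref{prop_distribution_of_elimination_time}, apply Theorem~\ref{thm_joint_lf_Nt_and_lambda_t} with $v=1/\delta$ to obtain $\mathbb E[\theta^{N_t}\mathbf 1_{\{\widetilde T\le t\}}\mid\mathcal F_s]=\theta^{N_s}e^{-A(s)\lambda_s}$, and differentiate at $\theta=1^-$. The paper's proof is terser and does not discuss the interchange of derivative and expectation or the subcriticality justification you supply; your added care on that point is welcome but goes beyond what the paper records.
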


\begin{proof}
According to Theorem \ref{thm_joint_lf_Nt_and_lambda_t} and Proposition \ref{prop_distribution_of_elimination_time}, we have
\begin{eqnarray}
\mathbb{E}\left[ \theta^{N_t} \mathbf{1}_{\{\widetilde{T}\le t\} } \mid \mathcal{F}_s\right]
&=&
  \mathbb{E}\left[\theta^{N_t} e^{-\frac{\lambda_t}{\delta}} \mid\mathcal{F}_s \right]  \nonumber\\
&=&
 \theta^{N_s} e^{-A(s)\lambda_s}.
\end{eqnarray}
Since $\mathbb{E}\left[N_t\mathbf{1}_{\{\widetilde{T}\le t\}} \mid \mathcal{F}_s \right] = \frac{\mathrm{d}}{\mathrm{d}\theta} \mathbb{E}\left[ \theta^{N_t} \mathbf{1}_{\{\widetilde{T}\le t\} } \mid \mathcal{F}_s\right] \bigg |_{\theta=1^{-}}$, the result immediately follows \eqref{eq_expectation_of_epidemic_size}.

\end{proof}

\subsection*{Distribution of Final Epidemic Size $N_{\infty}$}
The final epidemic size is one of the most important  epidemiological quantities to study. In fact, under the two-phase dynamic contagion model, the final epidemic size is the value of the point process $N_t$ when time goes to infinity. Conditional on $s>\ell$, since there are no externally-exciting jumps in the intensity, the distribution of $N_{\infty}$ can be characterised by  Proposition \ref{prop_final_epidemic_size_distribution} as below.
\begin{proposition}\label{prop_final_epidemic_size_distribution}
For $\ell<s$, the probability generating function of $N_{\infty}$ conditional on $\mathcal{F}_s$ is given as
\begin{equation}
\mathbb{E}\left[ \theta^{N_\infty}\mid \mathcal{F}_s\right]
=
e^{-v^*\lambda_s},
\end{equation}
where
\[
v^* = \frac{1}{\delta}\left(1-\theta \int\limits_{0}^\infty e^{-v^*y}\mathrm{d}G_2(y) \right).
\]
\end{proposition}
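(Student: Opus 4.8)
The plan is to obtain the probability generating function of $N_\infty$ by letting the terminal time $t\to\infty$ in the Phase-2 branch of Theorem \ref{thm_joint_lf_Nt_and_lambda_t} and identifying the limit of $A(s)$ with an equilibrium point of the ODE \eqref{eq_ode_lf_pgf}. Fix $\theta\in[0,1)$ and $\ell<s$. For each $t>s$ let $A_t(\cdot)$ denote the solution of \eqref{eq_ode_lf_pgf} with $v=0$, i.e. with terminal condition $A_t(t)=0$. Since $s>\ell$, on $[s,t]$ we have $\hat g(u;\cdot)=\hat g_2(u):=\int_0^\infty e^{-uy}\,\mathrm{d}G_2(y)$, so \eqref{eq_ode_lf_pgf} reduces to the autonomous equation $A_t'(u)=\Phi(A_t(u))$ with $\Phi(a):=\delta a-1+\theta\,\hat g_2(a)$, and Theorem \ref{thm_joint_lf_Nt_and_lambda_t} gives $\mathbb{E}[\theta^{N_t}\mid\mathcal{F}_s]=\theta^{N_s}e^{-A_t(s)\lambda_s}$.

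Next I pass to the limit. Since $N_t\uparrow N_\infty$ almost surely and $0\le\theta^{N_t}\downarrow\theta^{N_\infty}$ (with the convention $\theta^\infty:=0$), dominated convergence yields $\mathbb{E}[\theta^{N_\infty}\mid\mathcal{F}_s]=\theta^{N_s}\exp\!\big(-\lambda_s\lim_{t\to\infty}A_t(s)\big)$, the boundary case $\theta=1$ following by monotone continuity in $\theta$. Thus it remains to show that $v^*:=\lim_{t\to\infty}A_t(s)$ exists and solves $v^*=\tfrac1\delta\big(1-\theta\,\hat g_2(v^*)\big)$.

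For the ODE asymptotics, note that $\hat g_2$, being the Laplace transform of a probability law on $(0,\infty)$, is positive, strictly decreasing and convex, so $\Phi$ is convex with $\Phi(0)=\theta-1<0$ and $\Phi(a)\to+\infty$ as $a\to\infty$; hence $\Phi$ has a unique positive root $v^*$, and $\Phi(v^*)=0$ rearranges to precisely the fixed-point equation in the statement, with $v^*\in(0,1/\delta)$ automatically. Because $A_t(t)=0$ and $-\Phi(0)=1-\theta>0$, running \eqref{eq_ode_lf_pgf} backward shows that $u\mapsto A_t(u)$ increases away from $0$ as $u$ decreases from $t$ and remains trapped in $(0,v^*)$ (it cannot cross the zero $v^*$ of $\Phi$); monotonicity then forces $A_t(s)\to v^*$ as $t-s\to\infty$. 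Substituting gives $\mathbb{E}[\theta^{N_\infty}\mid\mathcal{F}_s]=\theta^{N_s}e^{-v^*\lambda_s}$, the prefactor $\theta^{N_s}$ being the contribution of the cases already observed by time $s$.

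The appeal to Theorem \ref{thm_joint_lf_Nt_and_lambda_t} and the dominated-convergence interchange are routine. The main obstacle is the ODE analysis: rigorously justifying that the backward flow converges to $v^*$ (rather than overshooting or stalling), and correctly handling the limit $\theta\uparrow1$, where the fixed-point equation can admit two nonnegative solutions — $v^*=0$ and a strictly positive one — in the supercritical regime $\delta<\mu_{G_2}$. There one must argue that the relevant value is $v^*=\lim_{\theta\uparrow1}v^*(\theta)$, namely the strictly positive root, so that the resulting elimination probability $e^{-v^*\lambda_s}<1$ correctly reflects the chance that the epidemic does not explode.
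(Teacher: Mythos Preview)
Your argument is correct and amounts to a self-contained proof of what the paper merely cites (Theorem~3.5 of \citet{dassios2011dynamic}): let $t\to\infty$ in the Phase-2 branch of Theorem~\ref{thm_joint_lf_Nt_and_lambda_t} with terminal data $v=0$, and identify $\lim_{t\to\infty}A_t(s)$ with the attracting equilibrium $v^*$ of the autonomous ODE. Your analysis of $\Phi(a)=\delta a-1+\theta\hat g_2(a)$ (convexity, unique positive root for $\theta<1$, backward monotone convergence trapped below $v^*$) is sound, and the dominated-convergence interchange is unproblematic. The discussion of the $\theta\uparrow 1$ limit and the supercritical ambiguity is a nice bonus; the paper implicitly works in the subcritical Phase-2 regime ($\mu_{G_2}<\delta$, i.e.\ branching ratio $R_a<1$), where $v^*(\theta)\to 0$ continuously and no selection issue arises.

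One point worth noting: your derivation yields $\mathbb{E}[\theta^{N_\infty}\mid\mathcal{F}_s]=\theta^{N_s}e^{-v^*\lambda_s}$, with the prefactor $\theta^{N_s}$ coming straight from \eqref{eq_conditional_lf_pgf}. The statement as printed in the paper omits this factor; your version is the correct one for $N_\infty$ (the printed formula is the generating function of $N_\infty-N_s$).
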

\begin{proof}
The result immediately follows Theorem 3.5 in \citet{dassios2011dynamic}.
\end{proof}
While the government interventions come into effect, if we assume  {\it i.i.d.} self-exciting jump sizes $Y_i\sim\textrm{Exp}(\beta)$ for $i=N_{\ell}+1,.... $, then, we have an explicit expression for the probability generating function of $N_{\infty}$ as
\[
\mathbb{E}[\theta^{N_{\infty}} \mid \mathcal{F}_s ]
=
\exp\left( - \frac{ \sqrt{(\delta \beta -1)^2 + 4 \delta \beta(1-\theta)} - (\delta \beta -1) }{2\delta} \lambda_{s} \right), \quad s>\ell.
\]
This implies that, the final epidemic size $N_{\infty}$ conditional on $\mathcal{F}_{\ell}$ follows a mixed-Poisson distribution with the probability mass function
\begin{equation}\label{eq_mix_poisson_distribution}
\mathbb{P}\left( N_{\infty} = k \mid \mathcal{F}_s\right)
=
\int\limits^\infty_0 \frac{v^k e^{-v}}{k!} m(v) \mathrm{d}v,
\qquad k =0,1,....,
\end{equation}
where $m(v)$ is the density function of the mixing distribution,
\[
m(v):=
\exp\left(\frac{\delta \beta -1}{2 \delta} \lambda_{s}
-  \left( \frac{\delta \beta -1}{2 \delta} \right)^2  \frac{\delta}{\beta} v
- \frac{ \frac{\beta}{2\delta} \lambda_{s}^2 }{2v}
 \right)
\frac{\sqrt{\frac{\beta}{2 \delta}} \lambda_{s}}{\sqrt{2 \pi}v^{\frac{3}{2}}},
\]
which is an inverse Gaussian distribution with parameters $\frac{\beta}{\delta \beta - 1}\lambda_{s}$ and $\frac{\beta}{2\delta}\lambda_{s}^2$.

\section{Empirical Study}\label{Sec_Implementation}

We provide a calibration scheme based on the daily increments of the two-phase dynamic contagion process $N_t$. Let us first denote the observations of the daily confirmed COVID-19 cases as $\{C_t\}_{t=0,1,2,...,T}$ . The mean square error (MSE) between the expected daily increments of $N_t$ and the actual reported daily confirmed COVID-19 cases is given as
\begin{equation}\label{eq_MSE}
\textsf{MSE}(\alpha,\beta, \delta, \varrho, \ell)
=
\frac{1}{T}\sum\limits_{t=0}^{T-1}\bigg(\mathbb{E}\left[ N_{t+1}-N_t  \right]-C_{t+1} \bigg)^2.
\end{equation}
We consider the calibration based on minimising the MSE  \eqref{eq_MSE}, i.e., we choose parameters $(\hat{\alpha}, \hat{\beta}, \hat{\delta},\hat{\varrho}, \hat{\ell})$ such that
\[
\textsf{MSE}(\hat{\alpha}, \hat{\beta}, \hat{\delta},\hat{\varrho}, \hat{\ell})
~:=~
\min_{(\alpha, \beta, \delta, \varrho, \ell)   }  \textsf{MSE}(\alpha, \beta, \delta, \varrho, \ell),
\]
with $\alpha,\beta,\delta,\varrho\ge 0$ and $\ell\in\mathbb{N}^+$.\\

Without loss of generality, for simplicity, we assume $Z_k = 1$ for any $k$, $Y_i\sim\textrm{Exp}(\alpha)$ for $i=1,..., N_{\ell}$ and $Y_i\sim\textrm{Exp}(\beta)$ for $i=N_{\ell}+1,... $, and $\lambda_0 = 0$ in \eqref{Eq_DCP_intensity} for model calibration. Other assumptions for $Z_k, \{Y_i\}_{i=1,...N_{\ell}, N_{\ell}+1,...}, \lambda_0$ can also be used if necessary.
We provide two empirical examples. The first one  concentrates on the COVID-19 pandemic in mainland China during the period from early January to late March 2020. The second one focuses on the worldwide COVID-19 pandemic during the period from mid February to early May 2020. The data we used are publicly available. Datasets are mostly cited from the associated official government health department websites for non-European countries and from the European Centre of Disease Control (ECDC) for European countries.\\

\subsection*{COVID-19 Pandemic in  Mainland China}
The daily confirmed COVID-19 cases for regions in China can be obtained from daily reports of the National Health Commission of the PRC.
We use  the reported daily confirmed cases for several regions in China from 2020-01-19 to 2020-03-31 as examples for model calibration. The corresponding estimation results are illustrated in Table  \ref{T_parameter_estimation_regions_China}. We can see that the estimator $\hat{\varrho}$ are quite different from each other. In particular, these regions that are close to Hubei, namely Henan, Hunan, Anhui, have relatively larger intensities for externally-exciting jumps, which means that these regions experienced more external shocks from Wuhan and these external shocks can be originally infected individuals from Wuhan. Naturally,  without taking into account Wuhan, the rest of Hubei has the largest intensity $\hat{\varrho}$ for externally-exciting jumps. The main government intervention established by the Chinese authority is the announcement of the completely lockdown of Wuhan and later the whole Hubei Province on $23$ January $2020$. One or two days later, all other regions enforced the quarantine and raised the alert of public health emergency. Setting the date 2020-01-19 as the initial time $t=0$, then  the government intervention took place when $t =4$ and came into effect when $t=\hat{\ell}$. Since the delay period of the government interventions is the difference of the date when the government interventions came into effect and the date when the government introduced the restriction measures, we can therefore observe from Table \ref{T_parameter_estimation_regions_China} that the estimated delays of the government interventions  for different regions therefore are between $5$ and $15$ days, which are  consistent to the {\it incubation time} of COVID-19 for most people, e.g. as found in a highly cited medical study of \citet{lauer2020incubation}.\\

\begin{table}[hbt!]\footnotesize 
\tabcolsep 0pt  \vspace*{-5pt}
\begin{center}
\def\temptablewidth{1\textwidth}\caption{Calibration parameters $(\hat{\alpha}, \hat{\beta}, \hat{\delta},\hat{\varrho}, \hat{\ell})$ for total confirmed COVID-19 cases from 2020-01-19 to 2020-03-31 for various regions in China.  }
{\rule{\temptablewidth}{1pt}}
\begin{tabular*}{\temptablewidth}{@{\extracolsep{\fill}}c|ccccc|c}
\hline
\backslashbox{~~Regions~~}{~~Parameters~~}& $\hat{\alpha}$ & $\hat{\beta}$  & $\hat{\delta}$ & $\hat{\varrho}$   &$\hat{\ell}$& $\frac{\textsf{MSE}(\hat{\alpha}, \hat{\beta}, \hat{\delta},\hat{\varrho}, \hat{\ell})}{N}$ \\ \hline
 \hline
Heilongjiang & 2.431123 & 8.036146 & 0.252324 & 0.934574 & 16  & 0.036414  \\ \hline
Sicuan       & 5.509284 & 7.402851 & 0.21608  & 3.913496 & 11 & 0.023037  \\ \hline
Shandong     & 5.303477 & 7.868065 & 0.340273 & 5.296465 & 19  & 0.033823  \\ \hline
Jiangxi      & 2.936512 & 7.903677 & 0.285922 & 3.782104 & 15 & 0.070023  \\ \hline
Anhui        & 3.644156 & 7.773938 & 0.339183 & 7.161244 & 18  & 0.037944  \\ \hline
Hunan        & 4.467952 & 7.660323 & 0.293860 & 8.351008 & 15  & 0.053281  \\ \hline
Zhejiang     & 1.398741 & 8.065125 & 0.260449 & 1.627822 & 09 & 0.144322  \\ \hline
Henan        & 3.370932 & 7.867786 & 0.286357 & 6.669105 & 15 & 0.042004  \\ \hline
Guangdong    & 2.316879 & 7.987904 & 0.256143 & 2.360507 & 12  & 0.073389  \\ \hline
Hubei       & 4.468435 & 28.46780  & 0.173874 & 58.24001 & 16  & 0.654813  \\
\hline
\end{tabular*}\label{T_parameter_estimation_regions_China}
{\rule{\temptablewidth}{1pt}}
\end{center}
\end{table}

The {\it branching ratio} (BR), which demonstrates the average infection rate, is determined by $\mathbb{E}[Y_i]/\delta$. In Table \ref{T_branching_ratio_China_regions}, we compare the estimated branching ratios before and after the government interventions came into effect, namely $R_b$ and $R_a$, respectively.
It is clear that the branching ratio for every region decreases significantly when the state changed, i.e. government interventions came into effect.  One can also access the efficiency for when regions implemented the restriction packages introduced by the central government by comparing the corresponding branching ratios $R_b$ and $R_a$. The comparison of $R_b$ and $R_a$ for regions in China are presented in  Figure \ref{Fig_BR_China}. We can see that the government restriction packages had been well-implemented for all regions in China. In particular, Hubei, where the strictest measure, i.e. the completely lockdown of Wuhan and Hubei, had been introduced, shows a dramatic drop of contagion rate after the interventions came into effect.
\\
 \begin{table}[hbt!] \footnotesize 
\tabcolsep 0pt  \vspace*{-5pt}
\begin{center}
\def\temptablewidth{0.8\textwidth}\caption{The estimated branching ratio (BR) before and after the government interventions came into effect, namely $R_b, R_a$ respectively, for regions in China. }
{\rule{\temptablewidth}{1pt}}
\begin{tabular*}{\temptablewidth}{@{\extracolsep{\fill}}c|c c}
\hline
\backslashbox{~~Regions~~}{~~BR~~}    & $R_b$      & $R_a$       \\
 \hline
 \hline
Heilongjiang & 1.630176 & 0.493166                  \\
 \hline
Sicuan       & 0.840021 & 0.625153                   \\
\hline
Shandong     & 0.554130 & 0.373512                     \\
 \hline
Jiangxi      & 1.191025 & 0.442510                     \\
 \hline
Anhui        & 0.809039  & 0.379250                   \\
 \hline
Hunan        & 0.761641   & 0.444234                   \\
\hline
Zhejiang     & 2.744989  & 0.476066                    \\
 \hline
Henan        & 1.035957  & 0.443853                  \\
 \hline
Guangdong    & 1.685052  & 0.488747                     \\
\hline
Hubei        & 1.287094 & 0.202028                 \\
\hline
\end{tabular*}\label{T_branching_ratio_China_regions}
{\rule{\temptablewidth}{1pt}}
\end{center}
\end{table}

\begin{figure}[hbt!]
	\begin{center}
		\includegraphics[width=1\textwidth]{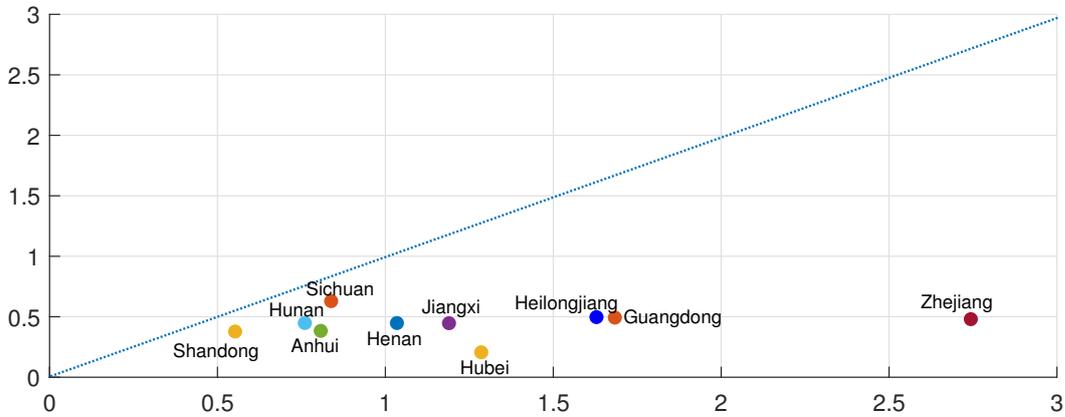}
		\caption{Comparison of the branching ratios {\it before} and {\it after} the government interventions came into effect for regions in China. The horizontal axis represents the branching ratio {\it before} the government interventions came into effect, namely $R_b$ and the vertical axis represents the branching ratio {\it after} the government interventions came into effect, i.e. $R_a$.  }
		\label{Fig_BR_China}
	\end{center}
\end{figure}

Figure \ref{Fig_Comparison_Model_Calibration_China} and \ref{Fig_Total_Cases_China} demonstrate comparisons between the expected daily/total confirmed cases for the two-phase dynamic contagion model under the calibrated parameter $(\hat{\alpha}, \hat{\beta}, \hat{\delta},\hat{\varrho}, \hat{\ell})$ in Table \ref{T_parameter_estimation_regions_China} and the actual daily/total confirmed COVID-19 cases for the period  2020-01-19 to 2020-03-31. We observe that the model allows different shapes of trend before the interventions came into effect. All these regions indicate relatively smooth exponential decay of daily new cases after the peak. In addition, the estimated cumulative confirmed cases are very close to the actual total confirmed COVID-19 cases, which further confirms that our new model is a good candidate for describing the propagation process.

\begin{figure}[hbt!]
\begin{center}
   \includegraphics[width=1\textwidth]{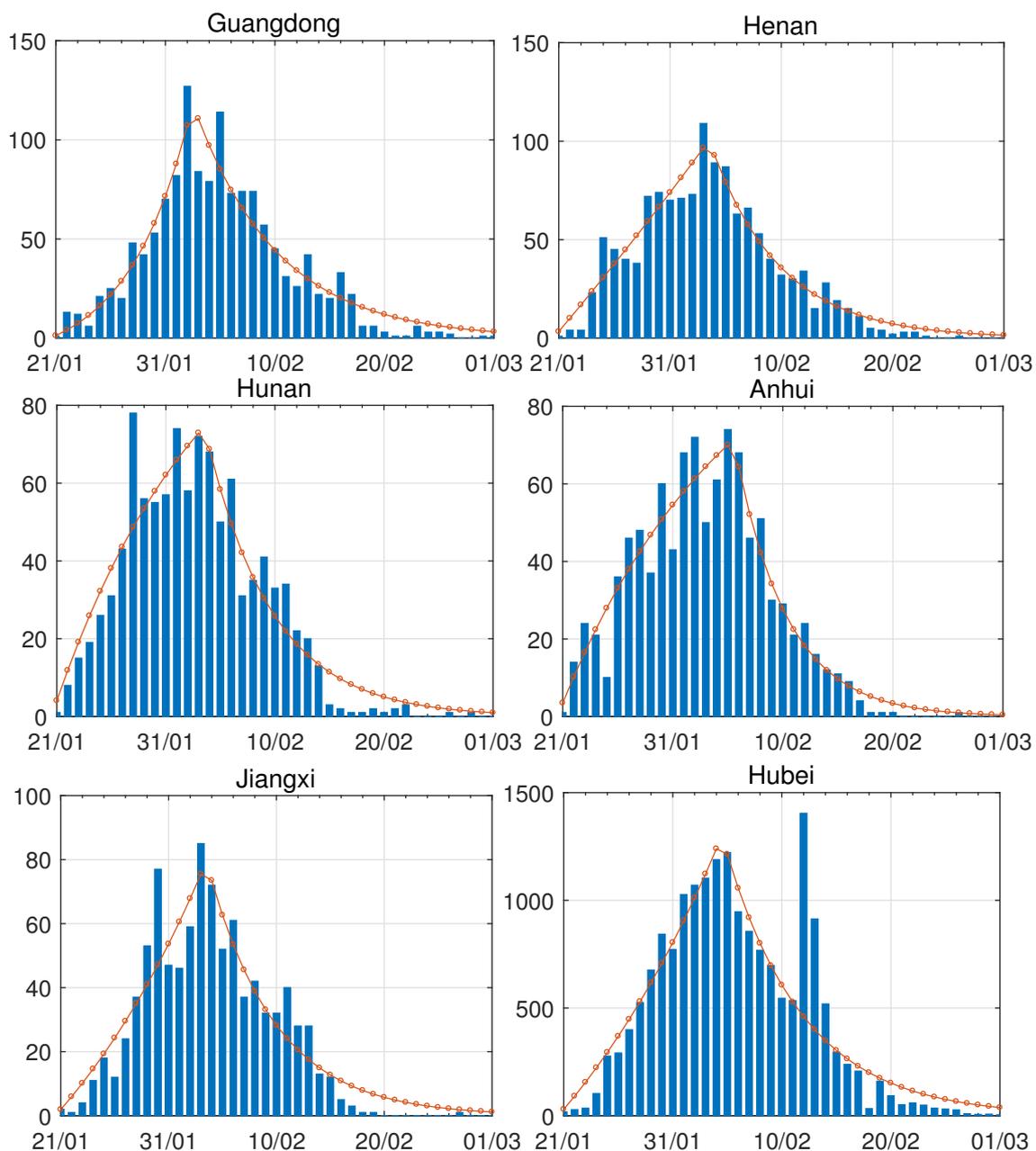}
    \caption{Model calibration comparisons between the expected daily confirmed cases under the calibrated parameters  $(\hat{\alpha}, \hat{\beta}, \hat{\delta},\hat{\varrho}, \hat{\ell})$ in Table \ref{T_parameter_estimation_regions_China} and  actual  daily confirmed COVID-19 cases from 2020-01-19 to 2020-03-31 for Guangdong, Henan, Hunan, Anhui, Jiangxi, and Hubei of China.}
    \label{Fig_Comparison_Model_Calibration_China}
    \end{center}
\end{figure}

\begin{figure}[hbt!]
\begin{center}

   \includegraphics[width=1\textwidth]{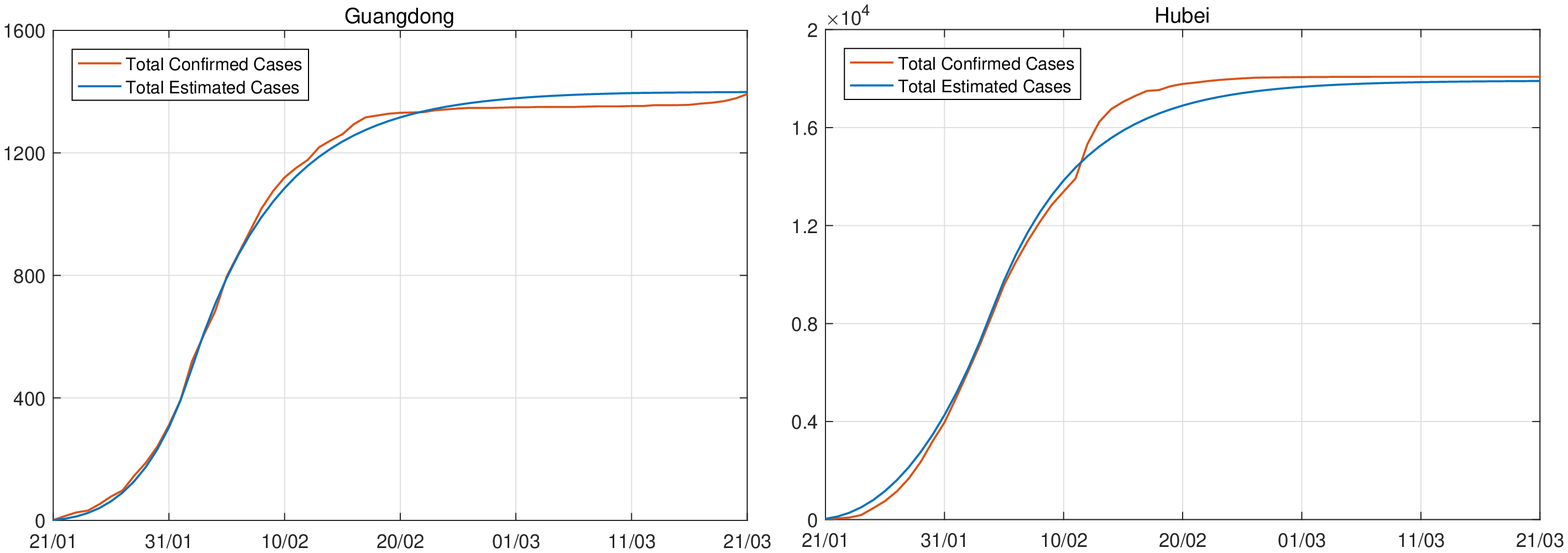}
    \caption{ Comparisons between total confirmed COVID-19 cases and total estimated cases under the calibrated parameters $(\hat{\alpha}, \hat{\beta}, \hat{\delta},\hat{\varrho}, \hat{\ell})$ in Table \ref{T_parameter_estimation_regions_China}  from 2020-01-19 to 2020-03-31 for Guangdong and Hubei, China.}
    \label{Fig_Total_Cases_China}
    \end{center}
\end{figure}

\subsection*{COVID-19 Pandemic for the World}
From mid-February 2020, the COVID-19 started to spread in other countries around the world. At beginning, only a small number of initial cases were reported for some countries in Europe, South/East Asia and North American. However, lately, several large outbreaks were reported in South Korea, Italy, Iran, Spain, Japan and the total number of cases outside China quickly passed the China's. The WHO then recognized the spread of COVID-19 as pandemic on 2020-03-11. We could use this as a second example to confirm our observations from the last exercise. The calibration settings were the same as the previous one. We use the reported daily confirm cases for different regions and countries around world from mid-February to early May  2020. Note that, due to the fact that the pandemic reached each country or territory at different time and the corresponding government interventions also imposed and came into effect at different times, there is no sense to calibrate the model using the data within the same truncated time series. Table \ref{T_parameter_estimation_world} presents the estimation results $(\hat{\alpha},\hat{\beta},\hat{\delta},\hat{\varrho}, \hat{\ell})$ of $(\alpha,\beta,\delta, \varrho, \ell)$ for various countries and territories. We notice that regions and countries like Italy, China, New York have much larger $\varrho$ compared with other areas. This phenomena is reasonable as these areas have specific outbreak area which created external shocks to other part of the regions and countries and hence the number of confirmed cases increased more rapidly than other regions and countries. \\

\begin{table}[hbt!]\footnotesize 
\tabcolsep 0pt  \vspace*{-5pt}
\begin{center}
\def\temptablewidth{1\textwidth}\caption{Calibration parameters $(\hat{\alpha}, \hat{\beta}, \hat{\delta},\hat{\varrho}, \hat{\ell})$  for total confirmed COVID-19 cases from mid-February, 2020 to early May, 2020.    }
{\rule{\temptablewidth}{1pt}}
\begin{tabular*}{\temptablewidth}{@{\extracolsep{\fill}}c|cc|ccccc|c}
\hline
\backslashbox{~~Regions~~}{~~ ~~}& $\textit{Date}_0$ & $\textit{Date}_G$ & $\hat{\alpha}$ & $\hat{\beta}$  & $\hat{\delta}$ & $\hat{\varrho}$   & $\hat{\ell}$ & $\frac{\textsf{MSE}(\hat{\alpha}, \hat{\beta}, \hat{\delta},\hat{\varrho}, \hat{\ell})}{N}$ \\ \hline
  \hline
Australia    & 2020-02-27  & 2020-03-15  & 1.703807 & 3.316083 & 0.401550 & 0.489762  & 28 & 0.124391 \\
\hline
Austria      & 2020-03-01  & 2020-03-10  & 2.860291 & 7.781922 & 0.222730 & 5.788434  & 24 & 0.326694 \\
\hline
China(Mainland)       & 2020-01-19  & 2020-01-23  & 2.846236 & 6.534269 & 0.242773 & 83.160865 & 17 & 1.003152 \\
\hline
Czech        & 2020-03-04  & 2020-03-10  & 3.464654 & 7.842862 & 0.174990 & 2.247588  & 25 & 0.359398 \\
\hline
France       & 2020-02-25  & 2020-03-13  & 3.453459 & 6.912410 & 0.191890 & 15.492535 & 36 & 5.410160 \\
\hline
Germany      & 2020-02-24  & 2020-03-12  & 3.246798 & 6.402097 & 0.201327 & 22.863727 & 32 & 3.464682 \\
\hline
Greece       & 2020-02-26  & 2020-03-10  & 4.181831 & 7.778144 & 0.200886 & 1.337306  & 35 & 0.172176 \\
\hline
Hong Kong           & 2020-03-01  & 2020-03-23  & 2.782369 & 8.054402 & 0.287048 & 0.512120  & 31 & 0.041714 \\
\hline
Iceland      & 2020-02-28  & 2020-03-13  & 3.405901 & 7.908691 & 0.275378 & 1.892734  & 33 & 0.163572 \\
\hline
Italy        & 2020-02-21  & 2020-03-05  & 3.068103 & 5.445646 & 0.211048 & 23.830431 & 30 & 0.911767 \\
\hline
Latvia       & 2020-03-07  & 2020-03-13  & 4.227388 & 7.819275 & 0.208209 & 1.145524  & 22 & 0.099024 \\
\hline
New York     & 2020-02-29  & 2020-03-12  & 2.449906 & 3.203563 & 0.356592 & 70.937604 & 33 & 2.926862 \\
\hline
New Zealand  & 2020-03-12  & 2020-03-16  & 2.122999 & 8.438649 & 0.204667 & 0.551023  & 14 & 0.064829 \\
\hline
Norway       & 2020-02-21  & 2020-03-12  & 4.496501 & 7.519573 & 0.195859 & 6.553447  & 29 & 0.303364 \\
\hline
South Korea  & 2020-02-16  & 2020-02-20  & 1.309507 & 8.771968 & 0.244316 & 5.641225  & 13 & 0.418018 \\
\hline
Switzerland  & 2020-02-25  & 2020-03-13  & 3.195829 & 7.046864 & 0.202615 & 14.083749 & 28 & 1.571950 \\
\hline
\hline
\end{tabular*}\label{T_parameter_estimation_world}
{\rule{\temptablewidth}{1pt}}
\end{center}
\end{table}

In Table \ref{T_parameter_estimation_world}, we also presents the date of day $0$, i.e. $\textit{Date}_0$, and the date of government interventions imposed, i.e. $\textit{Date}_G$. The delay period of government interventions came into effect therefore can be obtained given the estimated $\hat{\ell}$, with $Date_0$ and $Date_G$.  The details for the delay periods of regions and countries are illustrated in Figure  \ref{Fig_Delay_Intervention_world}. We can see that the delay of the interventions for different regions and countries is around $8\sim21$ days. In fact, the delay period can be considered as a criteria for evaluating the effectiveness of restrictions imposed by the authorities to prevent further spread of COVID-19.  In general, most regions and countries with short delay periods normally took tougher restrictions or more effective measures to stop the spread of virus. New Zealand and South Korea are two typical examples. The authority of New Zealand introduced a nationwide lockdown by closing all borders and entry ports to all nonresidents. On contrary, the South Korea authority introduced one of the largest and best-organised epidemic control programs to screen the mass population for the virus with isolation, tracing, quarantine took place simultaneously without further lockdown. Therefore, the estimated delay periods for these two countries are only $9$ days for New Zealand and $8$ days for South Korea, which are much shorter than the average incubation time of COVID-19. For most European countries, due to the containment restriction measures such as quarantines and curfews were not strictly put into effect, the associated delay periods are relatively longer than the incubation time of COVID-19.  \\

The estimated branching ratios before and after the government interventions came into effect for different countries and territories are reported in Table  \ref{T_BR_world}. And Figure \ref{Fig_BR_world} demonstrates a comparison between the BRs before the government interventions took effect and the BRs after the interventions worked, with a blue dash line  that represents $R_b=R_a$. We can immediately see that for most regions and countries, the BR   dropped dramatically after government interventions came into effect, which suggests that the restriction measures imposed by the authority indeed reduce the contagion/infection rate significantly. \\

\begin{figure}[hbt!]
	\begin{center}
		\includegraphics[width=1\textwidth]{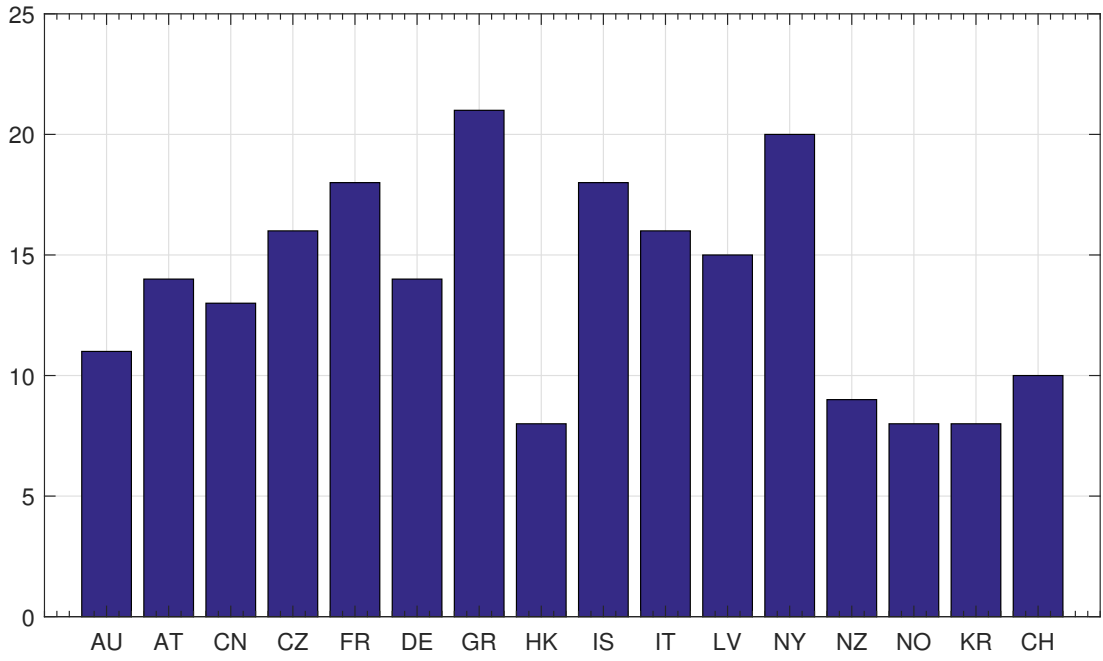}
		\caption{Comparison of the delay period for different regions and countries around the world. The horizontal axis represents  the abbreviation of regions and countries listed in Table \ref{T_BR_world} and the vertical axis represents the number of days for the government interventions came into effect after the government announcement the relevant measures.    }
		\label{Fig_Delay_Intervention_world}
	\end{center}
\end{figure}

\begin{figure}[hbt!]
	\begin{center}
		\includegraphics[width=1\textwidth]{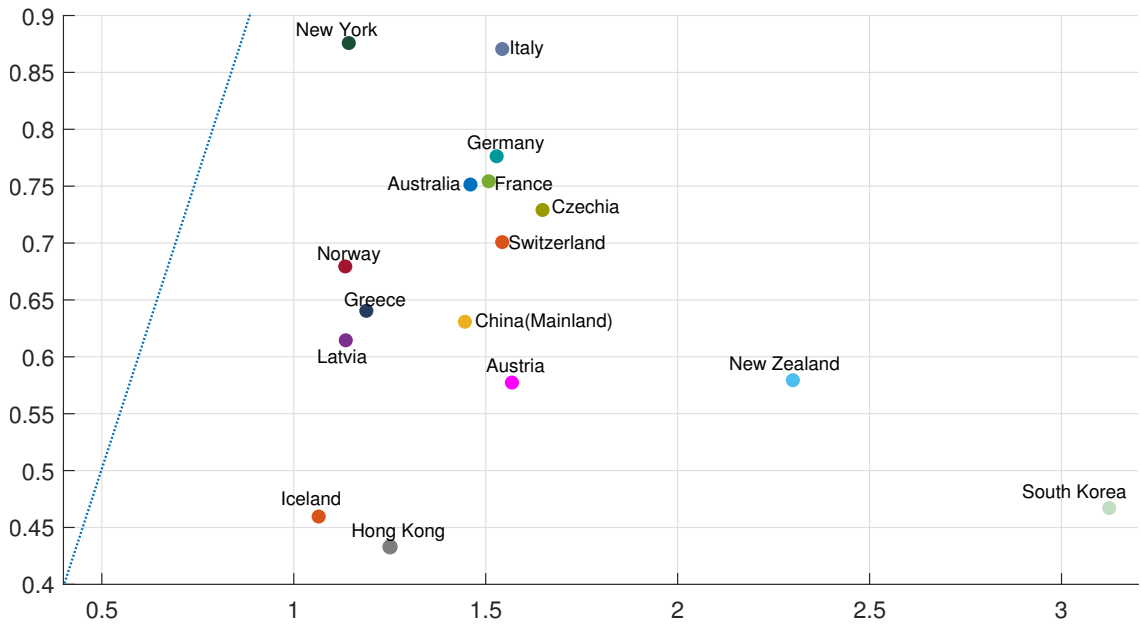}
		\caption{Comparison of the branching ratios {\it before} and {\it after} the government interventions came into effect for different regions and countries around the world. The horizontal axis represents the branching ratio {\it before} the government interventions came into effect, $R_b$, and the vertical axis represents the branching ratio {\it after} the government interventions came into effect, $R_a$. }
		\label{Fig_BR_world}
	\end{center}
\end{figure}

\begin{table}[hbt!]\footnotesize 
\tabcolsep 0pt  \vspace*{-5pt}
\begin{center}
\def\temptablewidth{0.6\textwidth}\caption{The estimated branching ratio (BR) {\it before} and {\it after} the government interventions came into effect, namely $R_b, R_a$ respectively, for regions and countries around the would.   }
{\rule{\temptablewidth}{1pt}}
\begin{tabular*}{\temptablewidth}{@{\extracolsep{\fill}}c|cc}
\hline
\backslashbox{~~Regions~~}{~~BR~~}    & $R_b$      & $R_a$       \\
\hline
 \hline
 Australia            & 1.461638 & 0.750991      \\
 \hline
 Austria              & 1.569681 & 0.576945      \\
 \hline
 China(Mainland)      & 1.447200 & 0.630380    \\
 \hline
 Czechia              & 1.649399 & 0.728637   \\
 \hline
 France               & 1.509016 & 0.753908     \\
 \hline
 Germany              & 1.529826 & 0.775845      \\
 \hline
 Greece               & 1.190377 & 0.639993      \\
  \hline
 Hong Kong            & 1.252077 & 0.432526      \\
  \hline
 Iceland              & 1.066200 & 0.459162    \\
  \hline
 Italy                & 1.544364 & 0.870102    \\
 \hline
 Latvia               & 1.136798 & 0.614084    \\
  \hline
 New York             & 1.144667 & 0.875377  \\
 \hline
 New Zealand          & 2.301455 & 0.579001      \\
  \hline
 Norway               & 1.135488 & 0.678991     \\
 \hline
 South Korea          & 3.125643 & 0.466606      \\
 \hline
 Switzerland         & 1.544349  & 0.700379    \\
\hline
\hline
\end{tabular*}\label{T_BR_world}
{\rule{\temptablewidth}{1pt}}
\end{center}
\end{table}

A comparison between the expected daily confirmed cases for the two-phase dynamic contagion model under the calibrated parameters $(\hat{\alpha}, \hat{\beta}, \hat{\delta},\hat{\varrho}, \hat{\ell})$ is reported in Table \ref{T_parameter_estimation_world}, and the actual daily confirmed COVID-19 cases for different regions and countries over the period of mid-February to early May are presented in Figure \ref{Fig_Comparison_Model_Calibration_world}. In general, we can see that the model can precisely catch the trend of infection, this further confirms that the two-phase dynamic contagion model is effective. Note that, we have smoothed the biggest jump of daily confirmed cases in China for better illustration and fitting purpose. This is due to a change in the confirmation standard established by the Chinese authority on 2020-02-12.\\

\begin{figure}[hbt!]
	\begin{center}
    \includegraphics[width=1\textwidth]{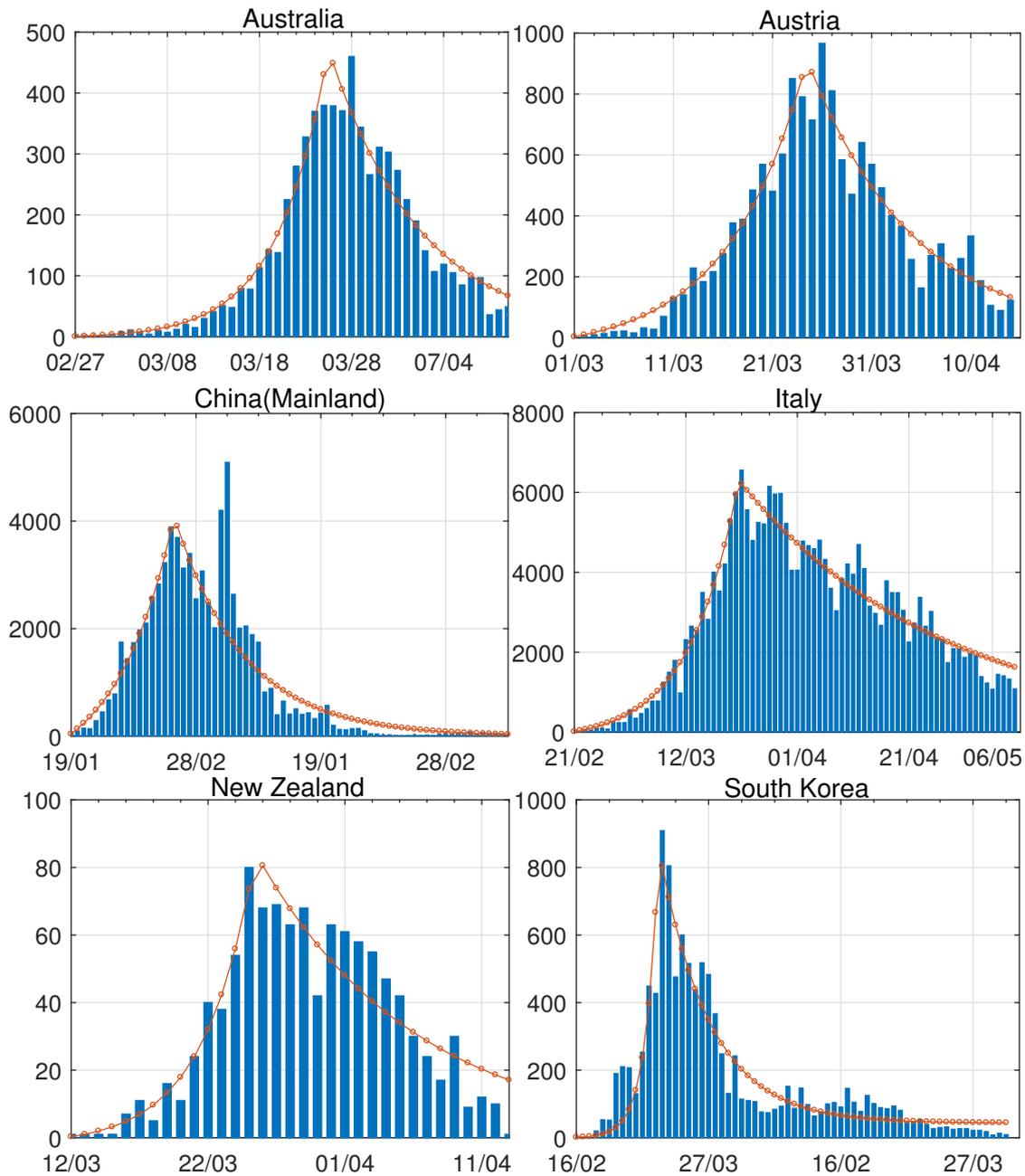}
\caption{Model calibration comparisons between the expected daily confirmed cases under the calibrated parameters  $(\hat{\alpha}, \hat{\beta}, \hat{\delta},\hat{\varrho}, \hat{\ell})$ in Table \ref{T_parameter_estimation_world} and  actual  daily confirmed COVID-19 cases for Australia, Austria, China(Mainland), Italy, New Zealand, and South Korea. }
		\label{Fig_Comparison_Model_Calibration_world}
	\end{center}
\end{figure}

In Figure \ref{Fig_Comparison_Model_Calibration_world_2}, we compare  the estimated daily increment with the actual confirmed COVID-19 cases for France, Germany, Switzerland, and New York. The daily records of confirmed cases for these areas were not as smooth as those countries illustrated in Figure \ref{Fig_Comparison_Model_Calibration_world}. The spikes within the graphs could be caused by many reasons such as the delay of reports, testing capacity, hospital capacity, diagnostic methods and etc. For instance, the daily confirmed cases for France, Germany, Switzerland and New York suddenly declined on a regular basis, which mostly happened during the weekends. Even so, we can see the model can still capture the trend of infectious evolution.
In Figure  \ref{Fig_Total_Cases_World1}, we also compare the actual total confirmed COVID-19 cases against the cumulative estimated cases with a confidence interval within two standard deviations{\footnote{The standard deviations can be derived based on results in Proposition \ref{eq_condtional_expection_lambda_n} and \ref{eq_condtional_expection2_lambda_n}. }} for  some typical countries and territories. The black dash line in each graph of Figure \ref{Fig_Total_Cases_World1} represents the end of data collection period for calibration. The red curve on the left of the black dash line shows the historical data used for calibration and the blue curve is the corresponding estimated result. The red and blue curves on the right of the black dash line demonstrate a comparison between the predicted and actual confirmed COVID-19 cases for countries and regions from the end of their data collection period to the end of May, 2020.  We can see that the estimated curves of the number of confirmed infections under the two-phase dynamic contagion model well fitted the actual propagation process of the COVID-19.   In addition, the forecasted infection cases in the coming weeks after the end of data collection period also well suited the up to date actual total confirmed COVID-19 cases.

\begin{figure}[hbt!]
	\begin{center}
     \includegraphics[width=1\textwidth]{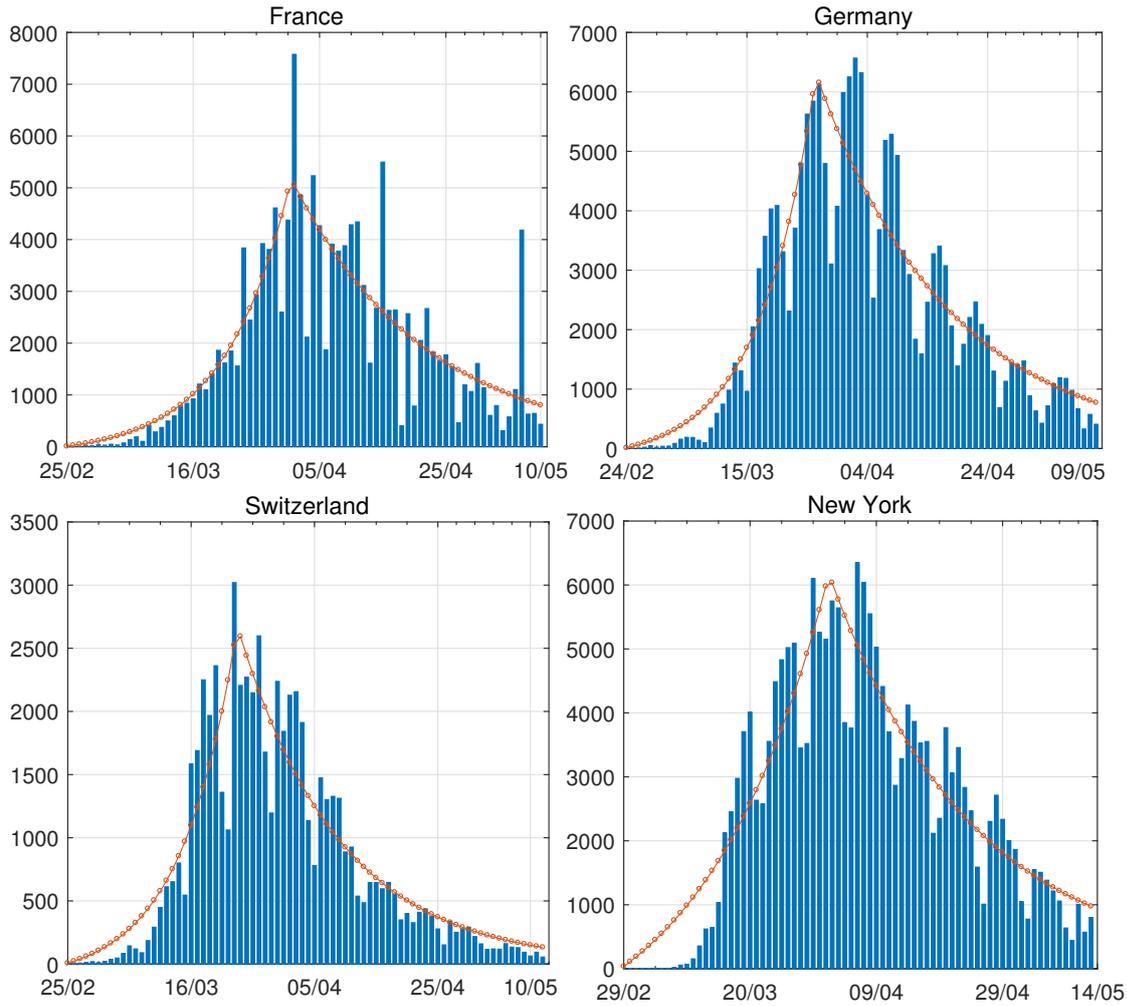}
\caption{Model calibration comparisons between the expected daily confirmed cases under the calibrated parameters  $(\hat{\alpha}, \hat{\beta}, \hat{\delta},\hat{\varrho}, \hat{\ell})$ in Table \ref{T_parameter_estimation_world} and  actual  daily confirmed COVID-19 cases for France, Germany, Switzerland, New York. }
		\label{Fig_Comparison_Model_Calibration_world_2}
	\end{center}
\end{figure}

\begin{figure}[hbt!]
\begin{center}
   \includegraphics[width=1\textwidth]{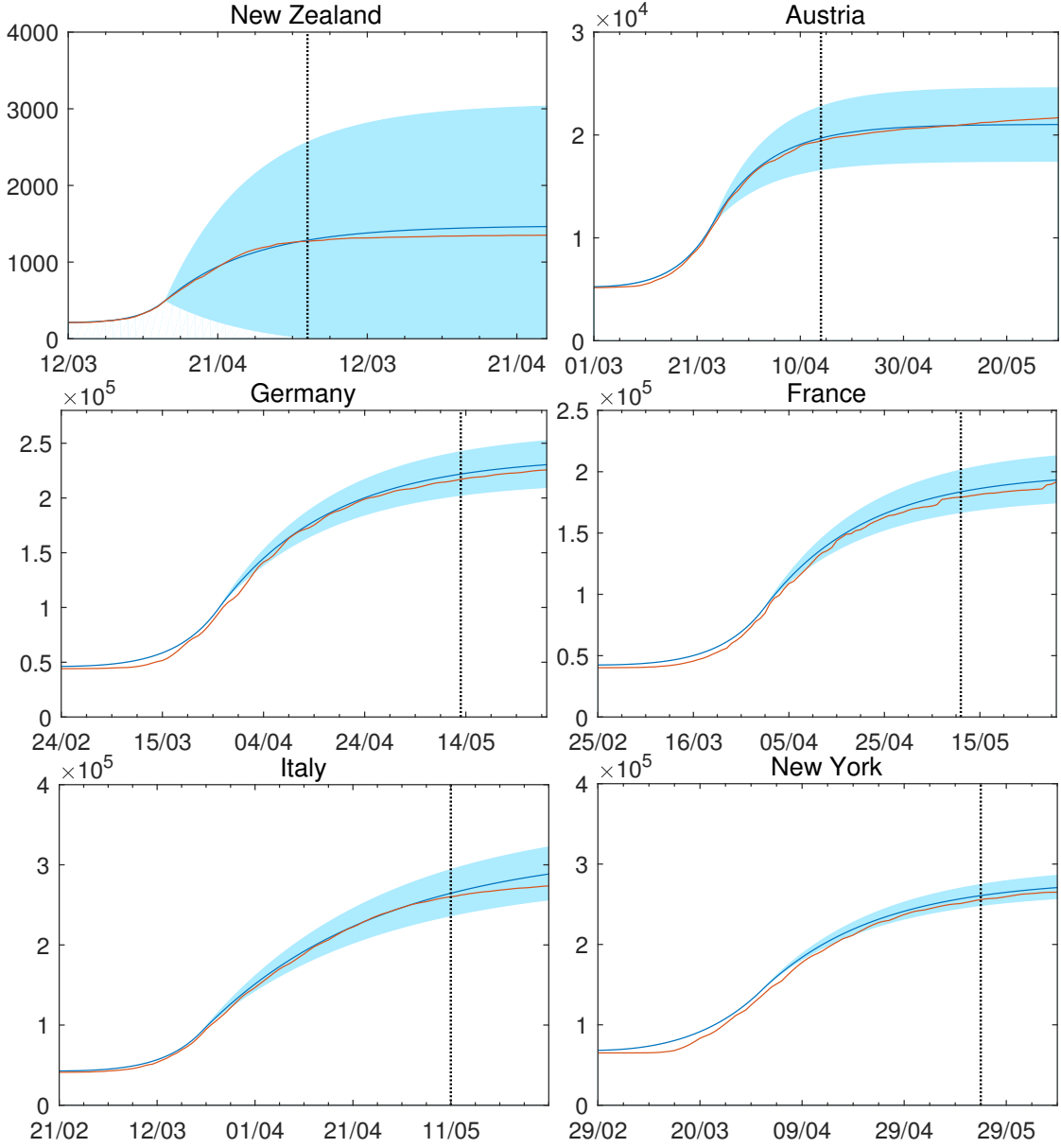}
   \caption{ Comparisons between total confirmed COVID-19 cases and total estimated cases under the calibrated parameters $(\hat{\alpha}, \hat{\beta}, \hat{\delta},\hat{\varrho}, \hat{\ell})$ in Table \ref{T_parameter_estimation_world} from mid-February onward for New Zealand, Austria, Germany, France, Italy, and New York. The red curve represents total confirmed COVID-19 cases. The blue curve represents the total estimated cases and the left zone of the black dash line illustrates historical data used for calibration, and the right zone demonstrates the predicted estimated cases.  The shadowed region plots the values within two standard deviations. }
    \label{Fig_Total_Cases_World1}
    \end{center}
\end{figure}

\subsection{Elimination Probability and Final Epidemic Size}
According to Proposition \ref{prop_distribution_of_elimination_time}, one could obtain the elimination probability of the epidemic by numerically solving the ODE \eqref{eq_ode_lf_pgf}. Based on the calibration parameters  $(\hat{\alpha}, \hat{\beta}, \hat{\delta},\hat{\varrho}, \hat{\ell})$ provided in Table \ref{T_parameter_estimation_world} for regions and countries, we could obtain the associated elimination probabilities. Figure \ref{Fig_Elimination_Prob} illustrates how   $\mathbb{P}\left( \widetilde{T}\le t|\mathcal{F}_{\hat{\ell}}\right)$ varies for different countries and territories. We can see that for regions and countries with effective restriction measures, the probability for a shorter period to observe the last ever event arrives after government interventions come into effect will be much higher. On contrary, for some regions and countries, longer periods are needed for elimination probabilities to be closed to $1$. For instance,  we can see that there is still a long way to go to end the COVID-19 pandemic for Italy.\\

\begin{figure}[hbt!]
	\begin{center}
		\includegraphics[width=0.93\textwidth]{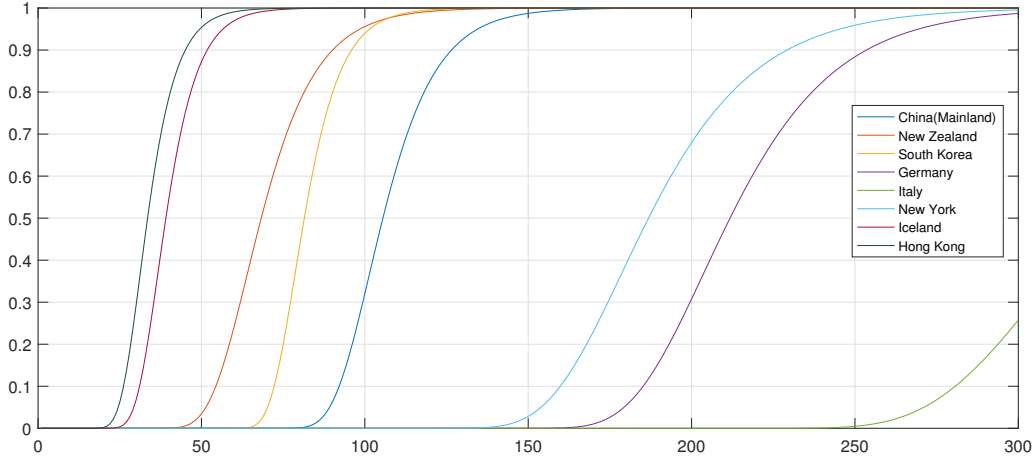}
		\caption{ Conditional elimination probability $\mathbb{P}(\widetilde{T}\le t|\mathcal{F}_{\hat{\ell}})$ for  China (Mainland), New Zealand, South Korea, Germany, Italy, New York, Iceland and Hong Kong under the associated calibration parameters $(\hat{\alpha}, \hat{\beta}, \hat{\delta},\hat{\varrho}, \hat{\ell})$ for these regions and countries suggested in Table \ref{T_parameter_estimation_world}.  }
		\label{Fig_Elimination_Prob}
	\end{center}
\end{figure}

The elimination time of the pandemic depends on many decisive factors, such as the initial intensity of the externally-exciting jumps, the time needed for the government interventions to come into effect, the size of the branching ratio after the government interventions came into effect, etc. Figure \ref{Fig_Time_vs_l}, \ref{Fig_Time_vs_Rho}, and \ref{Fig_Time_vs_BR} illustrate comparisons between the estimated $\hat{\ell}$,  $\hat{\varrho}$, $R_a$ against $\mathbb{E}[\widetilde{T}|\mathcal{F}_{\hat{\ell}}]$, respectively. From Figure \ref{Fig_Time_vs_l}, we can see that for most countries and territories,  the quicker the government interventions come into effect, the faster the pandemic will end.  However, some places like Hong Kong and Iceland still have relatively fast elimination time even though it takes longer for the government interventions to come into effect. This is probably because the  restriction measures for these places were imposed so early that reporting procedures were not properly in place yet. Figure \ref{Fig_Time_vs_Rho}, and \ref{Fig_Time_vs_BR} clearly demonstrate that the externally-exciting jump intensity $\varrho$ and the branching ratio after the government came into effect $R_a$ are important factors that determine the extinction time of the pandemic. In general, to reduce the extinction time of the pandemic, the first priority for the authorities should be introducing restriction measures such as national/subnational lockdown to reduce the intensity of the external imported cases, and while the external imported cases are controlled and thereafter negligible, the governments should simultaneously introduce enforced restrictions to prevent further transmission. If the government intervention strategies were effectively implemented without a lack of civic spirit, the infection rate of the virus after the these intervention measures come into place will be reduced significantly, and therefore lead to a quicker elimination of the COVID-19 pandemic. Note that the prediction for expected elimination time for regions and countries is based on the assumption that the government intervention measures are still taking into place in some form and propagation of the disease continues as in  Phase 2. Relaxation of the government intervention measures  will inevitably delay the disease elimination for most regions and countries.
\\

\begin{figure}[hbt!]
	\begin{center}
		\includegraphics[width=1\textwidth]{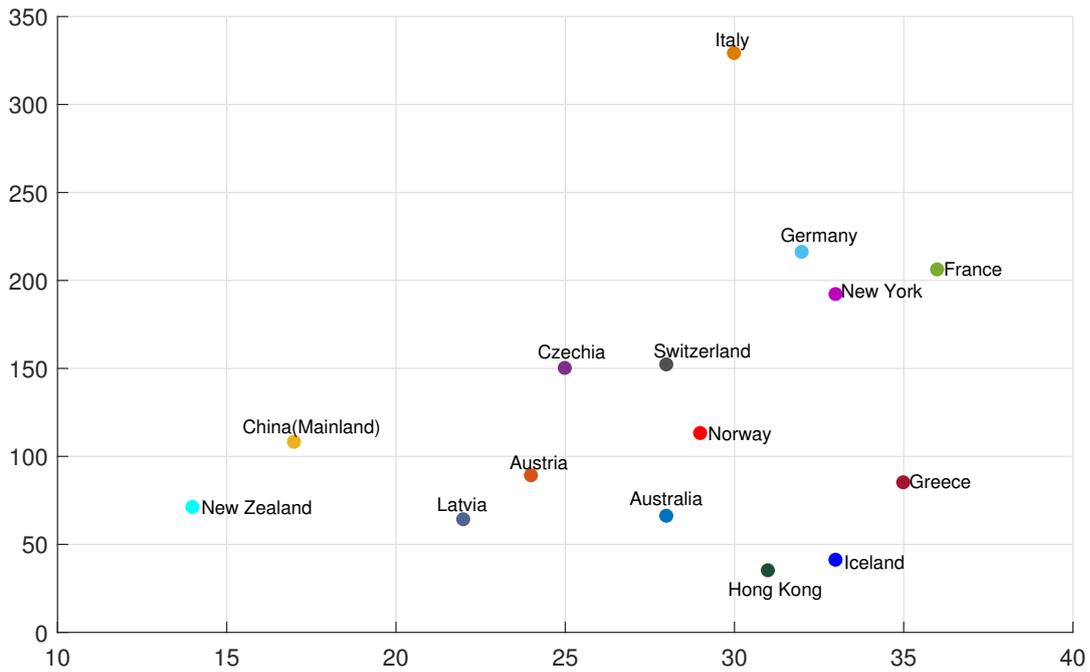}
		\caption{Comparison of the conditional expectation of the elimination time $\widetilde{T}$ and the estimated government interventions came into effect time for different regions and countries around the world. The horizontal axis represents $\hat{\ell}$, and the vertical axis represents $\mathbb{E}[\widetilde{T}|\mathcal{F}_{\hat{\ell}}]$  .}
		\label{Fig_Time_vs_l}
	\end{center}
\end{figure}

\begin{figure}[hbt!]
	\begin{center}
		\includegraphics[width=1\textwidth]{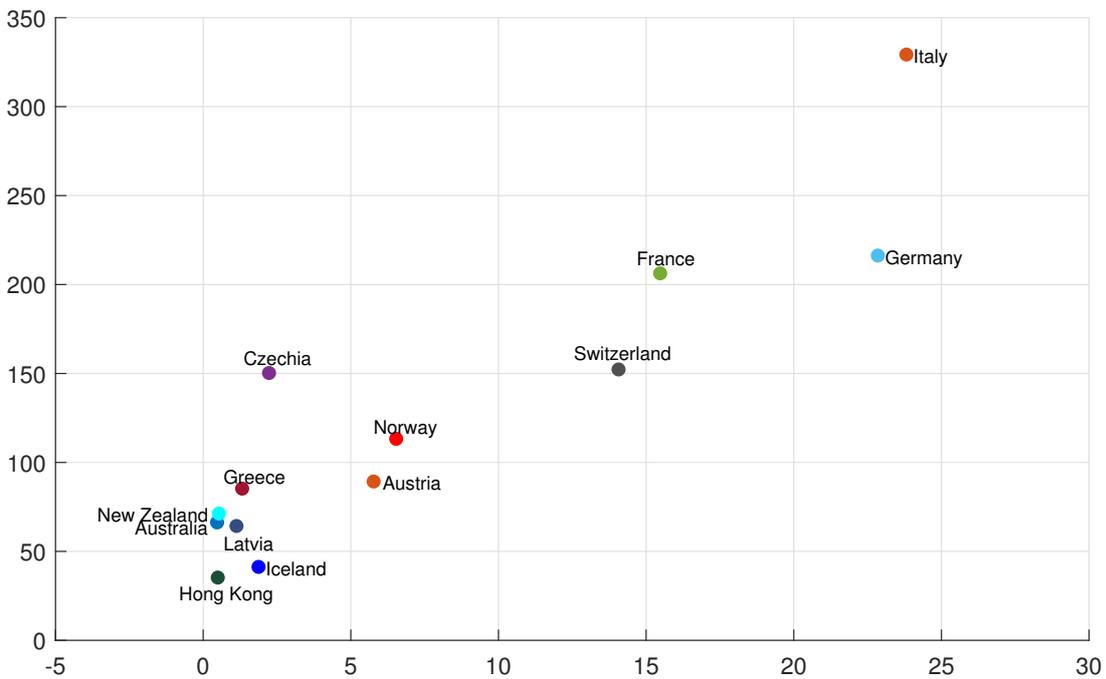}
		\caption{Comparison of  the conditional expectation of the elimination time $\widetilde{T}$ and the estimated intensity for externally-exciting jumps for different regions and countries around the world. The horizontal axis represents $\hat{\varrho}$, and the vertical axis represents $\mathbb{E}[\widetilde{T}|\mathcal{F}_{\hat{\ell}}]$  .}
		\label{Fig_Time_vs_Rho}
	\end{center}
\end{figure}

\begin{figure}[hbt!]
	\begin{center}
		\includegraphics[width=1\textwidth]{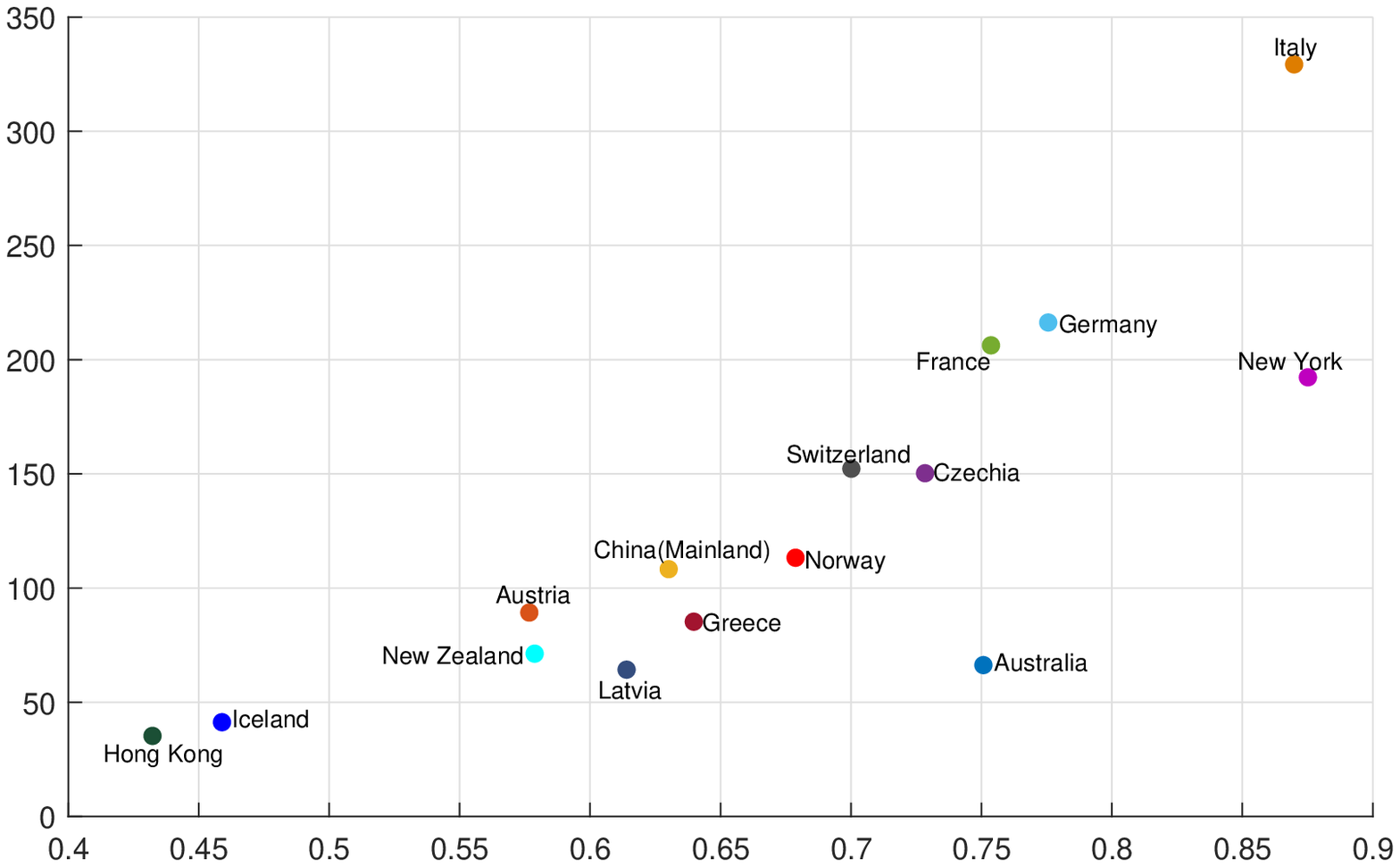}
		\caption{Comparison of the conditional expectation of the elimination time $\widetilde{T}$ and the branching ratio {\it after} the government interventions came into effect for different regions and countries around the world. The horizontal axis represents  $R_a$, and the vertical axis represents $\mathbb{E}[\widetilde{T}|\mathcal{F}_{\hat{\ell}}]$. }
		\label{Fig_Time_vs_BR}
	\end{center}
\end{figure}

Beside the conditional probability for the elimination time $\widetilde{T}$, the epidemic size $N_t$  given $\{\tilde{T}\le t\}$ can also be predicted according to the join expectation of $N_t$ and $\{\widetilde{T}\le t\}$ derived in Corollary \ref{prop_expectation_epidemic_size}. In Table \ref{T_Elimination_Prob_and_Size}, we report the $95\%$ confidence interval for elimination time $\widetilde{T}$  the condition expectation of the elimination time $\widetilde{T}$,  $\mathbb{E}[\widetilde{T}|\mathcal{F}_{\hat{\ell}}]$, the expected elimination date $\textit{Date}_E$, and the conditional expectation of the epidemic size $N_t$,   $\mathbb{E}[N_t | \mathcal{F}_{\hat{\ell}}\cap \{\widetilde{T}\le t\} ]$ with $t=\mathbb{E}[\widetilde{T}|\mathcal{F}_{\hat{\ell}}]$, for regions and countries with calibration parameters in Table \ref{T_parameter_estimation_world}. Note that, the regions and countries with more confirmed COVID-19 cases before government interventions came into effect will experience longer time to reach elimination state, like France, Germany, Italy, and New York. And the corresponding expected epidemic size for these areas are also much larger. Note that since we have smoothed the biggest jump of daily confirmed cases, adding up with the cases which have been smoothed,  the actual conditional expectation of the epidemic size is about $83113$, which is very close to the current total confirmed cases $82,993$ on 2020-05-27. In general, not only the expected epidemic size is very close to the actual total confirmed cases for the listed regions and countries, but also the estimated elimination date is very close to the actual eradicate date.  New Zealand is one typical example that can be used to demonstrate the effectiveness of our model in predicting the key epidemiological quantities. New Zealand authority has officially declared that the country has completely eradicated COVID-19 for now with total of $1154$ confirmed COVID-19 cases on 2020-06-08. This elimination date and the final epidemic size are very close to what we predicted for New Zealand under the two-phase dynamic contagion model, the predicted elimination date is around 2020-06-04 and the predicted epidemic size is about $1250$.  More remarkably, the historical data  we used for model calibration for New Zealand is from 2020-03-12 to 2020-04-13, which is already one and half months ago. This clearly shows that the two-phase dynamic contagion model is pretty powerful in forecasting cumulative confirmed COVID-19 cases, predicting possible elimination duration for the pandemic, and evaluating effectiveness of relevant government intervention measures.\\


\begin{table}[hbt!]\footnotesize 
\tabcolsep 0pt  \vspace*{-5pt}
\begin{center}
\def\temptablewidth{1\textwidth}\caption{The $95\%$ confidence interval for elimination time $\widetilde{T}$, the conditional expectation of the elimination time $\widetilde{T}$, the expected elimination date and the conditional expectation of the epidemic size $N_t$ given $\widetilde{T}\le t$ with $t=\mathbb{E}[\widetilde{T}|\mathcal{F}_{\hat{\ell}}]$ under the calibrated parameters  $(\hat{\alpha}, \hat{\beta}, \hat{\delta},\hat{\varrho}, \hat{\ell})$ for these regions and countries suggested in Table \ref{T_parameter_estimation_world}.  }
{\rule{\temptablewidth}{1pt}}
\begin{tabular*}{\temptablewidth}{@{\extracolsep{\fill}} c|c|c|c|c  }
\hline
\backslashbox{~~Regions~~}{~ ~} & $\mathbb{P}( \widetilde{T}\in(t_1, t_2)|\mathcal{F}_{\hat{\ell}})=95\%$ & $\mathbb{E}[\widetilde{T}|\mathcal{F}_{\hat{\ell}}]$  ~~~~   & $\textit{Date}_E$  & $\mathbb{E}[N_t | \mathcal{F}_{\hat{\ell}}\cap \{\widetilde{T}\le t\} ]$      \\ \hline
\hline
Australia        & $(47,97)$   &  66      &  2020-05-30   & 7059.46                         \\
\hline
Austria          & $(69,122)$  &  89      &  2020-06-21   & 15610.22                        \\
\hline
China (Mainland) & $(87,142)$  &108      &  2020-05-22   & 69675.10                         \\
 \hline
Czechia          & $(111,216)$ &150      &  2020-08-25   & 8932.448                      \\
\hline
France           & $(166,272)$ &206      &    2020-10-23 & 154112.00                       \\
\hline
Germany          & $(175,285)$ &216      &   2020-10-28  & 189116.50                      \\
\hline
Greece           & $(59,127)$  &85       &  2020-06-24   & 2650.07                     \\
\hline
Hong Kong        & $(23,54)$   &35       &  2020-05-05   & 945.81                         \\
\hline
Iceland          & $(28,61)$   &41       &  2020-05-11   & 1787.79                     \\
\hline
Italy            &$(264,445)$  &329      &  2021-02-13   & 276480.1                       \\
\hline
Latvia           &$(40,102)$   &64       &  2020-05-31   & 764.80                        \\
\hline
New York         &$(149,261)$  &192       &  2020-10-10   & 209311.80                     \\
\hline
New Zealand      &$(49,107)$   &71        &  2020-06-04   & 1249.77                        \\
\hline
Norway           &$(83,162)$   &113       &   2020-07-15  & 8024.73                       \\
\hline
Switzerland     & $(121,203)$    &152       &  2020-08-22   & 33414.19                    \\
\hline
\end{tabular*}\label{T_Elimination_Prob_and_Size}
{\rule{\temptablewidth}{1pt}}
\end{center}
\end{table}

The conditional distribution of the final epidemic size $N_\infty$ can be obtained by numerically inverse the probability generating function provided in Proposition \ref{prop_final_epidemic_size_distribution}. Since we assume the self-exciting jumps follows an exponential distribution after government interventions came into effect, the final epidemic size $N_\infty$ conditional on $\mathcal{F}_{\ell}$ follows a mixed-Poisson distribution with probability mass function specified in \eqref{eq_mix_poisson_distribution}. Figure \ref{Fig_Final_Epidemic_Size} demonstrates the conditional probability mass function of the difference between the finial epidemic size $N_\infty$ and the total number of confirm cases $N_{\ell}$ when government interventions came into effect, i.e., $\mathbb{P}\left(N_\infty-N_{{\ell}} =k\mid \mathcal{F}_{{\ell}} \right)$, for some regions and countries under the calibrated parameters  $(\hat{\alpha}, \hat{\beta}, \hat{\delta},\hat{\varrho}, \hat{\ell})$ for these regions and countries suggested in Table \ref{T_parameter_estimation_world}.

\begin{figure}[hbt!]
	\begin{center}
		\includegraphics[width=1\textwidth]{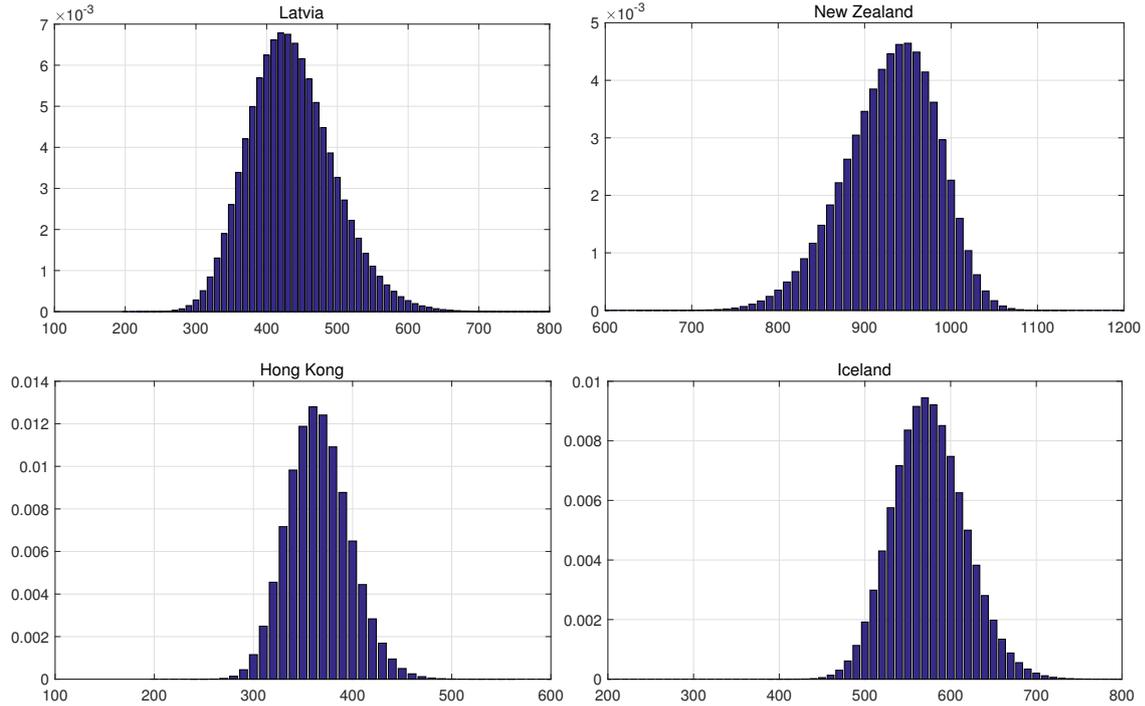}
		\caption{ Probability mass function  $\mathbb{P}\left(N_\infty-N_{\hat{\ell}} =k\mid \mathcal{F}_{\hat{\ell}} \right)$  for Latvia, New Zealand, Hong Kong and Iceland   under the calibrated parameters  $(\hat{\alpha}, \hat{\beta}, \hat{\delta},\hat{\varrho}, \hat{\ell})$ for these regions suggested in Table \ref{T_parameter_estimation_world}.}
		\label{Fig_Final_Epidemic_Size}
	\end{center}
\end{figure}

\section{Concluding Remarks}\label{Sec_Conclusion}

In this paper, we have introduced a two-phase dynamic contagion process for modelling the current spread of COVID-19.
This model allows randomness to the infectivity of individuals rather than a constant reproduction number as commonly assumed by standard models.
Key episdemiological quantities, such as the distribution of final epidemic size and  expected epidemic duration, are derived and estimated based on real data for various regions and countries. In addition, the associated time lag of the effect of intervention in each country or  region has been estimated, and our empirical results are consistent to the incubation time of COVID-19 for most people found by existing medical study such as \citet{lauer2020incubation}.
The aim of this paper is to demonstrate that our model, as a representative of Hawkes-based processes, could  be a valuable tool for epidemic modelling. In fact, the vast literature of Hawkes-based processes would also be relevant and potentially be applicable.
For example, multivariate extensions of Hawkes-based processes, such as \citet{embrechts2011multivariate} for analysing financial high-frequency data, could be adopted for  modelling the cross-region epidemic contagion. L{\'e}vy-driven extensions,  such as \citet{qu2019efficient} for portfolio credit risk, may perform better in capturing the heavy tail property of epidemic distribution.  In addition, easing of the government interventions will lead to change of parameters and delay extinction times. The model can be adjusted by introducing an additional phase with change of parameters.  When countries cycle between periods of restrictions and relaxations to manage COVID-19, we can adjust the two-phase dynamic contagion model by replacing the constant parameters to  piecewise time dependent parameters. These are all proposed as future research.

 
\small
\renewcommand\bibname{References}
\bibliography{BibTex_Hongbiao}
\bibliographystyle{apa} 

\end{document}